\newtheorem{remark}{Remark}
\newtheorem{definition}{Definition}
\newtheorem{theorem}{Theorem}
\newtheorem{assumption}{Assumption}
\begin{document}
%
\title{Convergence guarantees for response prediction \\ for latent structure network time series}
%
%
%
%

\author{
Aranyak Acharyya,
Francesco Sanna Passino,
Michael W. Trosset,
Carey E. Priebe
\IEEEcompsocitemizethanks{\IEEEcompsocthanksitem Aranyak Acharyya is with the Mathematical Institute for Data Science, Johns Hopkins University, Baltimore,
MD, 21218. E-mail: aachary6@jhu.edu
\IEEEcompsocthanksitem Francesco Sanna Passino is with the Department of Mathematics, Imperial College London, London,
United Kingdom. E-mail: f.sannapassino@imperial.ac.uk
\IEEEcompsocthanksitem Michael W. Trosset is with the Department of Statistics, Indiana University Bloomington, Bloomington,
IN 47405. E-mail: mtrosset@iu.edu
\IEEEcompsocthanksitem Carey E. Priebe is with the Department of Applied Mathematics and Statistics, Johns Hopkins University, 
Baltimore,
MD 21218. E-mail: cep@jhu.edu

}
}

\IEEEtitleabstractindextext{%
\begin{abstract}
In this article, we propose a technique to predict the response associated with an unlabeled time series of networks in a semisupervised setting. Our model involves a collection of time series of random networks of growing size, where some of the time series are associated with responses. Assuming that the collection of time series admits an unknown lower dimensional structure, our method exploits the underlying structure to consistently predict responses at the unlabeled time series of networks. Each time series represents a multilayer network on a common set of nodes, and raw stress embedding, a popular dimensionality reduction tool, is used for capturing the unknown latent low dimensional structure. Apart from establishing theoretical convergence guarantees and supporting them with numerical results, we demonstrate the use of our method in the analysis of real-world biological learning circuits of larval \textit{Drosophila}.
\end{abstract}

\begin{IEEEkeywords}
multilayer networks,  
 doubly unfolded adjacency spectral embedding, raw stress embedding
\end{IEEEkeywords}}

\maketitle

\IEEEdisplaynontitleabstractindextext

%
\IEEEpeerreviewmaketitle

\IEEEraisesectionheading{\section{Introduction}\label{sec:introduction}}
 In recent times, tools for statistical analysis and inference on random graphs have gained popularity owing to their applicability in extracting information from network data arising from various domains of real life, including neuroscience 
 \citep{Vogelstein2011GraphCU}, biology and social studies \citep{Holland1983StochasticBF}. Erd\H{o}s--R\'enyi random graphs \citep{Erdos1984OnTE} comprise the simplest model of random graphs where the probability of edge formation between any pair of nodes is equal. Stochastic blockmodels 
 \citep{Holland1983StochasticBF}
 are graphs where each node is assigned a community membership and the probability of edge formation between any pair of nodes depends only upon the corresponding community memberships. Random dot product graphs \citep{Young2007RandomDP,athreya2017statistical} represent a generalization of stochastic blockmodels, in which every node is assigned a feature vector, also known as its latent positions, and the probability of formation of an edge between any two nodes is the inner product between the corresponding latent positions. The notion of generalized random dot product graphs \citep{RubinDelanchy2017ASI} offer a further generalization to random dot product graphs, where the probability of edge formation between a given pair of nodes is the indefinite inner product between the corresponding latent positions. 

 While single random graphs have been largely explored in the recent years, the field of studying multiple networks is still emerging. In most cases, the study of multiple networks entails analysis of a multilayer network, which amounts to a collection of graphs on the same set of nodes. In \cite{jones2020multilayer}, a popular model, named the \textit{multilayer random dot product graph}, has been proposed to capture the behaviour of a multilayer graph, and a method to obtain node-level embeddings is also proposed. Results
 in \cite{gallagher2021spectral} show that \textit{unfolded adjacency spectral embedding}, the multiple network embedding procedure proposed in \cite{jones2020multilayer}, offers certain desirable stability guarantees, that is, if two nodes behave similarly then they are assigned similar embeddings up to noise.

 A popular conjecture suggests that in majority of real-life datasets, the dimension of the datapoints is only artificially high, and in essence the high-dimensional datapoints lie on or cluster around some low-dimensional manifold \citep{whiteley2022discovering}. 
This provides the motivation behind manifold learning.
Multidimensional scaling
\citep{borg2005modern}
comprise a class of procedures meant to learn the underlying low-dimensional structure that given high-dimensional datapoints correspond to.
Works in \cite{rubin2020manifold} show that the adjacency spectral embeddings of a latent position random graph with high-dimensional latent positions will be close to a low-dimensional manifold. Results in
\cite{trosset2020learning} and \cite{trosset2024continuous} establish that manifold learning can be carried out consistently from noisy datapoints sufficiently close to a low-dimensional manifold in a high-dimensional ambient space.
Based on these results, \cite{acharyya2023semisupervised} and \cite{acharyya2024consistent} respectively show that node-level and graph-level responses can be predicted in a semisupervised setting from observations on single and multiple graphs corresponding to datapoints on a low-dimensional manifold in a high-dimensional ambient space. 
This work, where every time series of networks corresponds to a high dimensional datapoint on a low dimensional manifold, can be regarded as an extension to 
\citet{acharyya2023semisupervised}
and \citet{acharyya2024consistent}.

In this paper, our model involves a collection of time series of networks, each time series corresponding to a point on a one-dimensional manifold in a high-dimensional ambient space. Some of the time series are assumed to be associated with responses linked to the corresponding scalar pre-images via a linear regression model. We propose a technique based on the works in \cite{baum2024doubly} to predict the response at an unlabeled time series, by exploiting the presence of the scalar pre-images. We establish convergence guarantees of our algorithm and demonstrate its performance guarantees numerically. Besides, we demonstrate the use of our method to analyze the learning circuit of a collection of \textit{Drosophila} larvae. 

We organize the manuscript in the following manner. \textit{Section \ref{Sec:Definitions_and_notations}}
introduces the reader to preliminaries of topics like multiplex random graphs, \textit{Doubly Unfolded Adjacency Spectral Embedding} \citep[DUASE;][]{baum2024doubly}
and raw stress embedding \citep{borg2005modern,trosset2024continuous}. Then \textit{Section \ref{Sec:Model_and_Method}} describes our model, mentions our goal and states our proposed algorithm, and is followed by \textit{Section \ref{Sec:Theoretical_results}} which states our theoretical results. In \textit{Section \ref{Sec:Simulations}}, the numerical results are shown in support of the theoretical findings. An illustration of the use of our method in analyzing biological learning circuits of \textit{Drosophila} is presented in \textit{Section \ref{Sec:Real_Data}}, followed by a conclusion in \textit{Section \ref{Sec:Discussion}}.
The proofs of our theoretical results are given in \textit{Section \ref{Sec:Appendix}}.


%
%
%
%

 

\section{Important definitions, notations and terminologies}
\label{Sec:Definitions_and_notations}
Here, we denote the set $\lbrace
1,2,\hdots n
\rbrace$ by $[n]$. Also, for any $k,n \in \mathbb{N}$, the set $\lbrace
(k-1)n+1,(k-1)n+2,\hdots kn
\rbrace$ is denoted by $\mathscr{S}^k_n$.
In this paper, every vector will be represented by a bold lower case letter such as $\mathbf{v}$. Any vector by default is a column vector. Matrices will be denoted by bold upper case letters such as $\mathbf{A}$. For a matrix $\mathbf{A}$, the $(i,j)$-th entry will be given by $\mathbf{A}_{i,j}$, the $i$-th row (written as a column vector) will be given by $\mathbf{A}_{i,*}$
and the $j$-th column will be given by $\mathbf{A}_{*,j}$. For any matrix $\mathbf{A} \in \mathbb{R}^{m \times n}$ with $\mathrm{rank}(\mathbf{A})=r$, the singular values in descending order will be given by $\sigma_1(\mathbf{A}) \geq \dots \geq \sigma_r(\mathbf{A})$, the
corresponding left singular vectors will be given by $\mathbf{u}_1(\mathbf{A}),\dots,\mathbf{u}_r(\mathbf{A})$ and the corresponding right singular vectors will be given by $\mathbf{v}_1(\mathbf{A}),\dots,\mathbf{v}_r(\mathbf{A})$. The $n \times n$ centering matrix will be denoted by $\mathbf{H}_n=\mathbf{I}_n-
\frac{1}{n}
(\boldsymbol{1}_n \boldsymbol{1}_n^T)$ where $\mathbf{I}_n$ is the $n \times n$ identity matrix and $\boldsymbol{1}_n$ is the $n$-dimensional vector of all ones. For a matrix $\mathbf{A} \in \mathbb{R}^{m \times n}$, $\mathbf{A}_{[\mathscr{S}_1,\mathscr{S}_2]}$ (where $\mathscr{S}_1 \subseteq [m]$, $\mathscr{S}_2 \subseteq [n]$) denotes the matrix obtained by retaining the rows with indices in $\mathscr{S}_1$ and the columns with indices in $\mathscr{S}_2$, and 
$\mathbf{A}_{[\mathscr{S}_1,.]}$ denotes the matrix obtained by retaining the  rows with indices in $\mathscr{S}_1$ and all the columns of $\mathbf{A}$, and $\mathbf{A}_{[.,\mathscr{S}_2]}$ denotes the matrix obtained by retaining the  columns with indices in $\mathscr{S}_2$ and all the rows of $\mathbf{A}$.

Discussed below are some important definitions and notions that we will frequently utilize in this paper.

\subsection{Preliminiaries on
 multilayer graphs and DUASE
}
\label{Subsec:Prelims_SBM_COSIE}

A graph is an ordered pair $(V,E)$ where $V$ denotes the set of vertices and $E \subseteq V \times V$ denotes the collection of edges. An adjacency matrix $\mathbf{A}$ of a graph is defined in the following manner:
$\mathbf{A}_{i,j}=1$ if $(i,j) \in E$, and $\mathbf{A}_{i,j}=0$ otherwise. Here, we deal with directed graphs, hence $\mathbf{A}$ has a positive probability of being asymmetric. Latent position random graphs are those graphs where each node is associated with a vector that is called its latent position. The latent position of the $i$-th node is denoted by $\mathbf{x}_i \in \mathbb{R}^d$ for some natural number $d$. 
First, we state the definition of random graphs.
\begin{definition}[Random graph; \citet{Holland1983StochasticBF}]
\label{Def:RdirG}
    Suppose $G$ is a directed random graph with $n$ nodes, such that the probability of an edge from the $i$-th node to the $j$-th node is given by $p_{ij}$. Then, the probability matrix of outward edges for the graph $G$ will be given by $\mathbf{P}=\left( p_{ij} \right)_{i,j=1}^n$, henceforth referred to as outward edge formation probability matrix (and sometimes we will drop `outward' for sake of convenience). The adjacency matrix $\mathbf{A} \in \mathbb{R}^{n \times n}$ satisfies $\mathbf{A}_{i,j} \sim^{ind} \mathrm{Bernoulli}(p_{ij})$, for all $i,j\in[n],\ i \neq j$, and $\mathbf{A}_{i,i}=0,\ i\in[n]$.
\end{definition}
 Secondly, we state the formal definition of multilayer directed random graphs.
\begin{definition}[Multilayer random graph; \citet{jones2020multilayer}]
\label{Def:multilayer_random_graphs}
     A multilayer random graph is a collection of graphs with a common set of nodes, but varying probability of edge from one node to another, for any pair of nodes. Suppose 
     $G_1,\dots,G_M$ denote a multilayer graph with a common set $V$ of $n$ nodes and $M$ layers. The probability of an edge from the $i$-th node to the $j$-th node is given by $p^{(l)}_{ij}$ for the graph $G_l$, for all $l \in [M]$. The adjacency matrix $\mathbf{A}^{(l)} \in \mathbb{R}^{n \times n}$ of $G_l$ satisfies $\mathbf{A}^{(l)}_{i,j} \sim^{ind} \mathrm{Bernoulli}(p^{(l)}_{ij})$, for all $i,j\in[n],\ i \neq j,\ l \in [M]$, and $\mathbf{A}^{(l)}_{i,i}=0,\ i\in[n],\ l \in [M]$.
 \end{definition}
 
In this paper, we will deal with a collection of multilayer random graphs, 
 where the number $n$ of nodes remains the same across all observations.  In total, there are $N$  multilayers of networks and there are $M$ graphs in each multilayer. In our notation, $G^{(k,l)}$ denotes the $l$-th network in the $k$-th layer,
    and the corresponding adjacency matrix is given by $\mathbf{A}^{(k,l)} \in \mathbb{R}^{n \times n}$
    for all $k \in [N], l \in [M]$. 
    In our paper, we shall be dealing with a collection of $N$ time series of graphs (in which each series contains $M$ graphs on a common set of $n$ nodes) where each time series can be regarded as a multilayer graph. For each $l \in [M]$,
    the $l$-th graph $G^{(k,l)}$ corresponds to a timepoint $\tau_l \in [0,\tau^*],\ \tau^*\in\mathbb R_+$, for all $k \in [N]$,
    for some values 
    $0<\tau_1<\tau_2<\dots \tau_{N-1}<\tau_N<\tau^*$. 
For the kind of collection of multilayer graphs defined above, a reliable method of embedding for subsequent inference is \textit{Doubly Unfolded Adjacency Spectral Embedding} \citep[DUASE;][]{baum2024doubly}. It offers the stability guarantee that if two nodes (with the possibility that they belong to different graphs in different layers) behave similarly, then they will be assigned similar embedding.  
The algorithmic pseudocode for DUASE is described in Algorithm~\ref{Algo:DUASE}. 

\begin{algorithm}[t]
\caption{DUASE$\Big(
\left\lbrace
\mathbf{A}^{(k,l)}
\right\rbrace_{k \in [N],l \in [M]}
;d
\Big)$} 
\label{Algo:DUASE}
\begin{algorithmic}[1]
\State Construct the block matrix $$\boldsymbol{\mathcal{A}}=
\left(
\mathbf{A}^{(k,l)}
\right)_{k \in [N],l \in [M]}
\in \mathbb{R}^{nN \times nM}
.$$
\State Define 
$\mathbf{U}_{\boldsymbol{\mathcal{A}}}=[\mathbf{u}_1(\boldsymbol{\mathcal{A}})| \dots \mathbf{u}_d(\boldsymbol{\mathcal{A}}) ] \in \mathbb{R}^{nN \times d}$ to be the matrix of the top $d$ left singular vectors, $\mathbf{V}_{\boldsymbol{\mathcal{A}}}=[\mathbf{v}_1(\boldsymbol{\mathcal{A}})| \dots \mathbf{v}_d(\boldsymbol{\mathcal{A}}) ] \in \mathbb{R}^{nM \times d}$ to be the matrix of the top $d$ right singular vectors, and $\boldsymbol{\Sigma}_{\boldsymbol{\mathcal{A}}}=\mathrm{diag}\{
\sigma_1(\boldsymbol{\mathcal{A}}),\dots
\sigma_d(\boldsymbol{\mathcal{A}})\}
\in \mathbb{R}^{d \times d}$ to be the diagonal matrix of the top $d$ singular values of $\boldsymbol{\mathcal{A}}$.
\State Compute the left embedding 
$\mathbf{X}_{\boldsymbol{\mathcal{A}}}=\mathbf{U}_{\boldsymbol{\mathcal{A}}} \boldsymbol{\Sigma}_{\boldsymbol{\mathcal{A}}}^{\frac{1}{2}}
\in \mathbb{R}^{nN \times d}
$ and
the right embedding 
$\mathbf{Y}_{\boldsymbol{\mathcal{A}}}=\mathbf{V}_{\boldsymbol{\mathcal{A}}} \boldsymbol{\Sigma}_{\boldsymbol{\mathcal{A}}}^{\frac{1}{2}}
\in \mathbb{R}^{nM \times d}
$.
\State The information of the $k$-th layer across all timepoints is contained in $\mathbf{X}^{(k)}_{\boldsymbol{\mathcal{A}}}=
\left(
\mathbf{X}_{\boldsymbol{\mathcal{A}}}
\right)_{[(k-1)n+1:kn,.]} \in \mathbb{R}^{n \times d}
$ for all $k \in [N]$ and the information of the $l$-th timepoint across all layers is contained in 
$\mathbf{Y}^{(l)}_{\boldsymbol{\mathcal{A}}}=
\left(
\mathbf{Y}_{\boldsymbol{\mathcal{A}}}
\right)_{[(l-1)n+1:ln,.]}
\in \mathbb{R}^{n \times d}
$ for all $l \in [M]$.
\State \Return 
$\lbrace
\mathbf{X}^{(k)}_{\boldsymbol{\mathcal{A}}}: k \in [N]
\rbrace$.
\end{algorithmic}
\end{algorithm}

\begin{remark}
    In the original paper on \textit{Doubly Unfolded Adjacency Spectral Embedding} \citep{baum2024doubly}, the DUASE algorithm technically returns both the left embeddings
    $
    \lbrace 
\mathbf{X}^{(k)}_{\boldsymbol{\mathcal{A}}}: k \in [N]
    \rbrace
    $ and the right embeddings 
    $
    \lbrace 
\mathbf{Y}^{(l)}_{\boldsymbol{\mathcal{A}}}: l \in [M]
    \rbrace
    $. However, since our goal in this paper needs only the left embeddings, we state the algorithm as in Algorithm~\ref{Algo:DUASE}, returning only $\lbrace
\mathbf{X}^{(k)}_{\boldsymbol{\mathcal{A}}}: k \in [N]
\rbrace$.
\end{remark}

\subsection{Raw stress embedding} 
\label{Subsec:Raw_stress_embedding}
\subsubsection{Finite sample size}
Raw stress embedding is a popular method for nonlinear dimensionality reduction. Given dissimilarities $\left\lbrace 
\boldsymbol{\Delta}_{i,j}
\right\rbrace_{i,j \in [N]}$ for some finite $N \in \mathbb{N}$, the goal is to find vectors 
$\mathbf{z}_1,\dots, \mathbf{z}_N \in  \mathbb{R}^c$ for some predetermined target embedding dimension $c$, such that the interpoint Euclidean distances between the $\mathbf{z}_i$ approximate the corresponding dissimilarities, that is, $\left\lVert 
\mathbf{z}_i-\mathbf{z}_j
\right\rVert \approx \boldsymbol{\Delta}_{i,j}$ for all $i,j \in [N]$. 
The full algorithm is given below.
\begin{algorithm}[H]
\caption{RSEmb(
$\left\lbrace 
\boldsymbol{\Delta}_{i,j}
\right\rbrace_{i,j \in [N]};c)$}
\label{Algo:raw_stress_finite_sample}
\begin{algorithmic}[1]
\State Define the raw stress function to be 
$
\sigma(\mathbf{z}_1,\mathbf{z}_2,\dots ,\mathbf{z}_N)=
\sum_{i,j=1}^N 
w_{i,j}
(
\left\lVert \mathbf{z}_i-\mathbf{z}_j \right\rVert-
\boldsymbol{\Delta}_{i,j}
)^2
$ where $\mathbf{z}_i \in \mathbb{R}^c$ for all $i \in [N]$.
\State Obtain
$(\hat{\mathbf{z}}_1,....,\hat{\mathbf{z}}_N)
= \arg \min \sigma(\mathbf{z}_1,\mathbf{z}_2, \dots ,\mathbf{z}_N)$
\State \Return $(\hat{\mathbf{z}}_1,\dots ,\hat{\mathbf{z}}_N)$.
\end{algorithmic}
\end{algorithm}
For this article, we restrict our attention to the regime of $c=1$. Moreover, we set $w_{i,j}=1$ for all pairs $(i,j)$.
\begin{remark}
     Iterative majorization \citep[for details, see Chapter $8$ of][]{borg2005modern} is used to minimize the raw stress function. In order to avoid getting trapped in local minima, classical multidimensional scaling outputs are typically used for initialization. In our paper, we assume the global minima is reached for theoretical reasons.
\end{remark}
\subsubsection{Infinite sample size}
\label{Subsubsec:Taw_Stress_infinite_sample}
Suppose $\mathcal{M}$ is a compact Riemannian manifold of innate dimension $c$, and let $\boldsymbol{\Delta}: \mathcal{M} \times \mathcal{M} \to \mathbb{R}$ be a Borel-measurable dissimilarity function. Assuming $g:\mathcal{M} \to \mathbb{R}^c$ is a Borel-measurable embedding function,
define the raw stress function by 
\begin{equation*}
\begin{aligned}
&\sigma((\boldsymbol{\Delta},\mathscr{P}),g) \\ &=
    \int_{\mathcal{M}}
    \int_{\mathcal{M}}
    \left(
    \left\lVert 
     g(m')-g(m'')
    \right\rVert -
    \boldsymbol{\Delta}(m',m'')
    \right)^2 
    \mathscr{P}(dm')
    \mathscr{P}(dm'').
\end{aligned}
\end{equation*}
Defining $\mathbf{D}(m',m'')=
\left\lVert 
g(m')-g(m'')
\right\rVert
$, we redefine the raw stress function as 
\begin{equation*}
\begin{aligned}    &\sigma((\boldsymbol{\Delta},\mathscr{P}),\mathbf{D}) \\ &=
    \int_{\mathcal{M}}
    \int_{\mathcal{M}}
    \left(
    \left\lVert 
     g(m')-g(m'')
    \right\rVert -
    \boldsymbol{\Delta}(m',m'')
    \right)^2 
    \mathscr{P}(dm')
    \mathscr{P}(dm'') \\ &=
    \left\lVert 
    \mathbf{D}-\boldsymbol{\Delta}
    \right\rVert^2_{\mathscr{P}}.
\end{aligned}
\end{equation*}
Let $\mathscr{Y}$ be the cone of all Euclidean pseudometrics, then we define the raw stress minimizer for dissimilarity $\boldsymbol{\Delta}$ with respect to probability function $\mathscr{P}$ as 
\begin{equation*}
    \mathrm{Min}(\boldsymbol{\Delta},\mathscr{P})=
    \left\lbrace 
     \mathbf{D} \in \mathscr{Y}:
     \sigma((\boldsymbol{\Delta},\mathscr{P}),\mathbf{D})
     \leq 
     \inf_{\mathbf{D} \in \mathscr{Y}}
     \sigma((\boldsymbol{\Delta},\mathscr{P}),\mathbf{D})
    \right\rbrace.
\end{equation*}

\section{Model and Methodology} 
\label{Sec:Model_and_Method}
Our model involves a set of time series of graphs. Each time series can be regarded as a multilayer random directed graph. This is so because a multilayer random graph is a collection of graphs on a common set of nodes, while a time series of graphs in practice involves a collection of realizations of a single graph over multiple timepoints (for instance, sequence of snapshots of a network of neurons in the brain of an organism). It is assumed that some of the time series are associated with a scalar response. It is also assumed that each time series corresponds to a scalar pre-image. A simple linear regression model links the responses to the scalar pre-images of the time series of graphs. 

There are $N$ time series in total, and each time series has $M$ graphs, where each graph has $n$ nodes. The adjacency matrix of the $l$-th graph in the $k$-th time series is denoted by $\mathbf{A}^{(k,l)}$, and let the corresponding probability matrix be $\mathbf{P}^{(k,l)}$.
We define the grand probability matrix to be $\boldsymbol{\mathcal{P}}=
(
\mathbf{P}^{(k,l)}
)_{k \in [N],l \in [M]}
$.
We denote $$
\lbrace 
\mathbf{X}^{(k)}_{\boldsymbol{\mathcal{P}}}: k \in [N]
\rbrace
=\mathrm{DUASE}\left(\left\lbrace 
\mathbf{P}^{(k,l)}
\right\rbrace_{k \in [N],l \in [M]}\right),$$ where we recall that for every $k \in [N]$, $\mathbf{X}^{(k)}_{\boldsymbol{\mathcal{P}}} \in \mathbb{R}^{n \times d}$. Suppose there exist scalars $t_i$ 
such that for all $k_1,k_2$,
\begin{equation*}
    \left\lVert 
\mathbf{X}^{(k_1)}_{\boldsymbol{\mathcal{P}}}-
\mathbf{X}^{(k_2)}_{\boldsymbol{\mathcal{P}}}
\right\rVert_{2,\infty}=
|t_{k_1}-t_{k_2}|.
\end{equation*}
Suppose $s \ll K$ is a fixed natural number and for $k \in [s]$, response $y_k$ is associated with the $k$-th time series. We further assume that the responses $y_k$ are linked to the scalar pre-images $t_k$ via a simple linear regression model, that is,
\begin{equation*}
    y_k=\alpha+\beta t_k+\epsilon_k
\end{equation*}
where $\epsilon_k \sim^{iid} N(0,\sigma^2_{\epsilon})$. 

Our goal is to predict $y_{s+1}$ for the unlabeled $(s+1)$-th time series. To do that, we first estimate $\mathbf{X}^{(k)}_{\boldsymbol{\mathcal{P}}}$ with $\mathbf{X}^{(k)}_{\boldsymbol{\mathcal{A}}}$, where 
$$
\lbrace 
\mathbf{X}^{(k)}_{\boldsymbol{\mathcal{A}}}: k \in [N]
\rbrace
=\mathrm{DUASE}\left(\left\lbrace 
\mathbf{A}^{(k,l)}
\right\rbrace_{k \in [N],l \in [M]}\right).$$ The matrices $\mathbf{X}^{(k)}_{\boldsymbol{\mathcal{A}}}$ estimate their population counterpart matrices
$\mathbf{X}^{(k)}_{\boldsymbol{\mathcal{P}}}$ consistently up to an orthogonal transformation, and hence the pairwise distances between
$ 
\mathbf{X}^{(k)}_{\boldsymbol{\mathcal{A}}}
$
can consistently estimate the corresponding pairwise distances 
$
\mathbf{X}^{(k)}_{\boldsymbol{\mathcal{P}}}
$. Hence the population (involving probability matrices $\mathbf{P}^{(k,l)}$) dissimilarity matrix $\boldsymbol{\Delta}$ is estimated by the sample (involving adjacency matrices $\mathbf{A}^{(k,l)}$) dissimilarity matrix $\hat{\boldsymbol{\Delta}}$, where 
\begin{align} 
&\boldsymbol{\Delta}=
    \left(
\left\lVert
\mathbf{X}^{(k_1)}_{\boldsymbol{\mathcal{P}}}-
\mathbf{X}^{(k_2)}_{\boldsymbol{\mathcal{P}}}
\right\rVert
    \right)_{k_1,k_2=1}^N
=
\left(
|t_{k_1}-t_{k_2}|
\right)_{k_1,k_2=1}^N, \notag
    \\
&\hat{\boldsymbol{\Delta}}=
  \left(
\left\lVert
\mathbf{X}^{(k_1)}_{\boldsymbol{\mathcal{A}}}-
\mathbf{X}^{(k_2)}_{\boldsymbol{\mathcal{A}}}
\right\rVert
    \right)_{k_1,k_2=1}^N.
\end{align}
Then, we apply raw stress minimization algorithm on $\hat{\boldsymbol{\Delta}}$ into $\mathbb{R}$, and obtain scalars 
$\left\lbrace
\hat{z}_k
\right\rbrace_{k=1}^N
$. Treating the embeddings $\hat{z}_k$ as proxy regressors, we predict the response $y_{s+1}$ with $\Tilde{y}_{s+1}$. The full procedure is described in \textit{Algorithm~\ref{Algo:response_TSG_prediction}}.

\begin{algorithm}[H]
\caption{PredTSGResp$\Big(
\left\lbrace
\mathbf{A}^{(k,l)}
\right\rbrace_{k \in [N],l \in [M]}
;d;r
\Big)$} 
\label{Algo:response_TSG_prediction}
\begin{algorithmic}[1]
\State Compute the DUASE of $\lbrace
\mathbf{A}^{(k,l)}
\rbrace_{k \in [N],l \in [M]}$:
$$
\left\lbrace 
\mathbf{X}^{(k)}_{\boldsymbol{\mathcal{A}}}
\right\rbrace_{k=1}^N
=
\mathrm{DUASE}
\left(\left\lbrace 
\mathbf{A}^{(k,l)}
\right\rbrace_{k \in [N],l \in [M]}\right).
$$
\State Obtain the estimated pairwise distance matrix
$$\hat{\boldsymbol{\Delta}}=
\left(
\left\lVert 
\mathbf{X}^{(k_1)}_{\boldsymbol{\mathcal{A}}}-
\mathbf{X}^{(k_2)}_{\boldsymbol{\mathcal{A}}}
\right\rVert_{2,\infty}
\right)_{k_1,k_2=1}^N
.$$
\State Use raw stress embedding to obtain 
$$
\left\lbrace 
\hat{z}_k
\right\rbrace_{k=1}^{N}=
\mathrm{RSEmb}(\hat{\boldsymbol{\Delta}};1)
.$$
\State Compute the estimated regression coefficients:
$$\hat{b}=
\frac
{
\sum_{i=1}^s (y_i-\Bar{y})(\hat{z}_i-\Bar{\hat{z}})
}
{
\sum_{i=1}^s (\hat{z}_i-\Bar{\hat{z}})^2
}, \quad
\hat{a}=\Bar{y}-\hat{b}\Bar{\hat{z}},
$$
where $\Bar{\hat{z}}=s^{-1}\sum_{i=1}^s\hat z_i$.
\State \Return Predicted responses
$\Tilde{y}_{r}=\hat{a}+\hat{b}\hat{z}_{r}$.
\end{algorithmic}
\end{algorithm}

From now onwards, we shall be dealing with scenarios where $N,M,n$ vary all together. Specifically, there exists a sequence $(N_K,M_K,n_K)$ such that 
 $N_K \to \infty,\ M_K \to \infty,\ n_K \to \infty$ as $K \to \infty$ in a manner that the \textit{Theorem \ref{Th:DUASE_XA_const_est_XP}} holds. If it is clear from the context that the $K$-th instant is being spoken of, we shall omit the subscript $K$ and replace $N_K$ with $N$ (and $M_K$ with $M$, $n_K$ with $n$).
 Stated below are our model assumptions. 
 \begin{assumption}
    \label{Asm:rank_of_grand_P_matrix}
    The grand probability matrix, defined as $\boldsymbol{\mathcal{P}}=(\mathbf{P}^{(k,l)})_{k \in [N_K], l \in [M_K]}$, satisfies $\mathrm{rank}(\boldsymbol{\mathcal{P}})=d$ for all sufficiently large $K$.
 \end{assumption}
 \begin{assumption}
     \label{Asm:rates_of_growth}
     The number of graphs per multilayer and the number of multilayers both must grow slower than the number of nodes in each graph, that is, $M_K=o(n_K)$, $N_K=o(n_K)$.
 \end{assumption}
 \begin{assumption}
\label{Asm:scalars_in_compact_set}
The scalars $t_k \sim^{iid} \mathscr{P}$ for all $k$, where $\mathcal{L}=\mathrm{support}(\mathscr{P}) \subset \mathbb{R}$ is closed and bounded. 
 \end{assumption}
 Having discussed our setting in this section, we move on to the next section to state the results establishing asymptotic properties of our proposed PredTSGResp algorithm. 
 
 \section{Theoretical results}
\label{Sec:Theoretical_results}
In this section, we present our theoretical results. The results are primarily based on two results from the literature, one that establishes consistency of DUASE embeddings (\textit{Theorem~\ref{Th:DUASE_XA_const_est_XP}}) and another that establishes continuity of raw stress embeddings (\textit{Theorem~\ref{Th:dissimilarity_minimizer_convergence}}). 
\begin{theorem}
\citep{baum2024doubly}
\label{Th:DUASE_XA_const_est_XP}
    Suppose there are $N$ time series of networks, each consisting of $M$ directed random latent position graphs where each graph has $n$ nodes. Each time series can be regarded as a multiplex of graphs. Let 
    $\mathbf{P}^{(k,l)}$ denote the outward edge formation probability matrix and
    $\mathbf{A}^{(k,l)}$ denote the adjacency matrix of the $l$-th graph in the $k$-th series, for all $k \in [N], l \in [M]$. Further, assume 
    that \textit{Assumptions \ref{Asm:rank_of_grand_P_matrix}} and \textit{\ref{Asm:rates_of_growth}} hold.
    Denoting 
    $$\lbrace 
\mathbf{X}^{(k)}_{\boldsymbol{\mathcal{P}}}: k \in [N]
    \rbrace=
    \mathrm{DUASE}\left(
    \left\lbrace 
    \mathbf{P}^{(k,l)}
    \right\rbrace_{k \in [N],l \in [M]};d
    \right)
    $$ and
    $$\lbrace 
\mathbf{X}^{(k)}_{\boldsymbol{\mathcal{A}}}: k \in [N]
    \rbrace=
    \mathrm{DUASE}\left(
    \left\lbrace 
    \mathbf{A}^{(k,l)}
    \right\rbrace_{k \in [N],l \in [M]};d
    \right),
    $$ 
    there exists $\mathbf{Q} \in \mathcal{O}(d)$ such that
    for each $k \in \mathbb{N}$,
    \begin{equation*}
        \left\lVert 
        \mathbf{X}^{(k)}_{\boldsymbol{\mathcal{A}}}-\mathbf{X}^{(k)}_{\boldsymbol{\mathcal{P}}}
        \mathbf{Q}
        \right\rVert_{2,\infty} \to^P 0
    \end{equation*}
as $K \to \infty$.
\end{theorem}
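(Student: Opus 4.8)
\emph{Proof proposal.} The statement is a concentration-plus-perturbation result, and I would establish it by the standard concentration-and-perturbation argument (as in \citet{baum2024doubly}): combine a spectral-norm bound on the noise $\boldsymbol{\mathcal{E}}:=\boldsymbol{\mathcal{A}}-\boldsymbol{\mathcal{P}}$ with a lower bound on $\sigma_d(\boldsymbol{\mathcal{P}})$, and then upgrade the resulting operator-norm subspace estimate to an entrywise one. Since the two-to-infinity norm of a block of rows is dominated by that of the whole matrix, it suffices to prove $\lVert \mathbf{X}_{\boldsymbol{\mathcal{A}}}-\mathbf{X}_{\boldsymbol{\mathcal{P}}}\mathbf{Q}\rVert_{2,\infty}\to^P 0$ for the full $nN\times d$ embeddings $\mathbf{X}_{\boldsymbol{\mathcal{A}}}=\mathbf{U}_{\boldsymbol{\mathcal{A}}}\boldsymbol{\Sigma}_{\boldsymbol{\mathcal{A}}}^{1/2}$ and $\mathbf{X}_{\boldsymbol{\mathcal{P}}}=\mathbf{U}_{\boldsymbol{\mathcal{P}}}\boldsymbol{\Sigma}_{\boldsymbol{\mathcal{P}}}^{1/2}$, and restrict to the rows $\mathscr{S}^k_n$ at the end.

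First I would record the two scalings that drive everything. Since $\boldsymbol{\mathcal{E}}$ is an $nN\times nM$ matrix of independent, mean-zero, $[-1,1]$-bounded entries, a matrix Bernstein inequality gives $\lVert\boldsymbol{\mathcal{E}}\rVert=O_P(\sqrt{n\max(N,M)})$, up to logarithmic corrections that Assumption~\ref{Asm:rates_of_growth} absorbs. On the population side, Assumption~\ref{Asm:rank_of_grand_P_matrix} makes $\boldsymbol{\mathcal{P}}$ exactly rank $d$; trivially $\sigma_1(\boldsymbol{\mathcal{P}})\le\lVert\boldsymbol{\mathcal{P}}\rVert_F\le n\sqrt{NM}$, while the matching lower bound $\sigma_d(\boldsymbol{\mathcal{P}})\asymp n\sqrt{NM}$ follows from the non-degeneracy of the latent positions (writing $\boldsymbol{\mathcal{P}}$ through its rank-$d$ latent-position factors, whose Gram matrices have all eigenvalues of order $nN$ and $nM$ respectively); hence all $d$ nonzero singular values of $\boldsymbol{\mathcal{P}}$ are comparable. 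Consequently $\lVert\boldsymbol{\mathcal{E}}\rVert/\sigma_d(\boldsymbol{\mathcal{P}})=O_P(1/\sqrt{n\min(N,M)})=o_P(1)$, so Wedin's theorem produces an orthogonal $\mathbf{W}\in\mathcal{O}(d)$ with $\lVert\mathbf{U}_{\boldsymbol{\mathcal{A}}}-\mathbf{U}_{\boldsymbol{\mathcal{P}}}\mathbf{W}\rVert=o_P(1)$, and Weyl's inequality gives $\lVert\boldsymbol{\Sigma}_{\boldsymbol{\mathcal{A}}}-\boldsymbol{\Sigma}_{\boldsymbol{\mathcal{P}}}\rVert\le\lVert\boldsymbol{\mathcal{E}}\rVert$.

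The crux is the two-to-infinity refinement, since reading a bare Wedin bound row by row would cost a factor $\sqrt{nN}$ and be vacuous. I would run a leave-one-out / first-order-expansion analysis of entrywise singular-subspace perturbation: expand
\begin{equation*}
\mathbf{U}_{\boldsymbol{\mathcal{A}}}-\mathbf{U}_{\boldsymbol{\mathcal{P}}}\mathbf{W}=\boldsymbol{\mathcal{E}}\,\mathbf{V}_{\boldsymbol{\mathcal{P}}}\boldsymbol{\Sigma}_{\boldsymbol{\mathcal{P}}}^{-1}\mathbf{W}+\boldsymbol{\Psi},
\end{equation*}
bound the rows of the linear term by a Bernstein inequality applied to each row of $\boldsymbol{\mathcal{E}}$ against the incoherent columns of $\mathbf{V}_{\boldsymbol{\mathcal{P}}}$ (using $\lVert\mathbf{V}_{\boldsymbol{\mathcal{P}}}\rVert_{2,\infty}\asymp(nM)^{-1/2}$ and $\lVert\boldsymbol{\Sigma}_{\boldsymbol{\mathcal{P}}}^{-1}\rVert\asymp(n\sqrt{NM})^{-1}$), and bound the residual $\boldsymbol{\Psi}$ (terms carrying two or more factors of $\boldsymbol{\mathcal{E}}$, plus the subspace-misalignment term) in operator norm times $\lVert\mathbf{U}_{\boldsymbol{\mathcal{P}}}\rVert_{2,\infty}\asymp(nN)^{-1/2}$; the leave-one-out coupling is exactly what decouples a given row of $\boldsymbol{\mathcal{E}}$ from the perturbed subspace on the right. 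This yields $\lVert\mathbf{U}_{\boldsymbol{\mathcal{A}}}-\mathbf{U}_{\boldsymbol{\mathcal{P}}}\mathbf{W}\rVert_{2,\infty}=O_P(\sqrt{\log(nNM)}\,/(n\sqrt{NM}))$.

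Finally I would assemble the embeddings. Taking $\mathbf{Q}\in\mathcal{O}(d)$ to be the orthogonal polar factor aligning $\mathbf{W}\boldsymbol{\Sigma}_{\boldsymbol{\mathcal{A}}}^{1/2}$ to $\boldsymbol{\Sigma}_{\boldsymbol{\mathcal{P}}}^{1/2}$ (equivalently $\mathbf{Q}=\mathbf{W}$ up to lower-order terms), write
\begin{equation*}
\mathbf{X}_{\boldsymbol{\mathcal{A}}}-\mathbf{X}_{\boldsymbol{\mathcal{P}}}\mathbf{Q}=\big(\mathbf{U}_{\boldsymbol{\mathcal{A}}}-\mathbf{U}_{\boldsymbol{\mathcal{P}}}\mathbf{W}\big)\boldsymbol{\Sigma}_{\boldsymbol{\mathcal{A}}}^{1/2}+\mathbf{U}_{\boldsymbol{\mathcal{P}}}\big(\mathbf{W}\boldsymbol{\Sigma}_{\boldsymbol{\mathcal{A}}}^{1/2}-\boldsymbol{\Sigma}_{\boldsymbol{\mathcal{P}}}^{1/2}\mathbf{Q}\big).
\end{equation*}
The first term has two-to-infinity norm at most $\lVert\mathbf{U}_{\boldsymbol{\mathcal{A}}}-\mathbf{U}_{\boldsymbol{\mathcal{P}}}\mathbf{W}\rVert_{2,\infty}\lVert\boldsymbol{\Sigma}_{\boldsymbol{\mathcal{A}}}^{1/2}\rVert=O_P(\sqrt{\log(nNM)}/(n^{1/2}(NM)^{1/4}))=o_P(1)$; the second is at most $\lVert\mathbf{U}_{\boldsymbol{\mathcal{P}}}\rVert_{2,\infty}$ times an operator-norm quantity, and since the square-root map is Lipschitz on the spectral window $[\sigma_d(\boldsymbol{\mathcal{P}})/2,\,2\sigma_1(\boldsymbol{\mathcal{P}})]$ with constant of order $(n\sqrt{NM})^{-1/2}$, this contribution is $O_P((nN)^{-1/2}\sqrt{n\max(N,M)}\,(n\sqrt{NM})^{-1/2})=o_P(1)$ as well. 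Restricting to $\mathscr{S}^k_n$ gives the claim, with the single $\mathbf{Q}$ serving all $k$ because it comes from the global subspace alignment. The main obstacle is this third step: the operator-norm Davis--Kahan/Wedin bound is far too crude in $\lVert\cdot\rVert_{2,\infty}$, so the argument genuinely needs the leave-one-out decoupling together with incoherence of $\mathbf{U}_{\boldsymbol{\mathcal{P}}}$ and $\mathbf{V}_{\boldsymbol{\mathcal{P}}}$; a secondary nuisance is carrying the $\boldsymbol{\Sigma}^{1/2}$ factor and the two rotations $\mathbf{W},\mathbf{Q}$ through consistently, which is routine given the bounded condition number $\sigma_1(\boldsymbol{\mathcal{P}})/\sigma_d(\boldsymbol{\mathcal{P}})=O(1)$.
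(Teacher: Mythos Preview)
The paper does not give its own proof of this theorem: it is imported wholesale from \citet{baum2024doubly} and merely stated as a tool on which Propositions~\ref{Prop:DUASE_pairwise_max_diff_zero} and~\ref{Prop:Raw_stress_embedding_consistency} rest. So there is nothing in the paper to compare your argument against.

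That said, your sketch is the standard route for results of this type and is almost certainly the architecture used in the cited reference: matrix-Bernstein concentration for $\boldsymbol{\mathcal{E}}$, Weyl/Wedin for the operator-norm subspace alignment, and then a leave-one-out or first-order expansion to upgrade to a $\lVert\cdot\rVert_{2,\infty}$ bound, followed by the square-root assembly. One point to flag: your claim that $\sigma_d(\boldsymbol{\mathcal{P}})\asymp n\sqrt{NM}$ does not follow from Assumptions~\ref{Asm:rank_of_grand_P_matrix} and~\ref{Asm:rates_of_growth} alone (rank $d$ and growth rates say nothing about how small the $d$-th singular value can be). The original reference carries additional non-degeneracy conditions on the latent positions that deliver this lower bound, and the present paper is tacitly borrowing those; you should state them explicitly rather than asserting the scaling. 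Apart from that, the outline is sound.
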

The above theorem states that as the number of time series, number of graphs and number of nodes increase simultaneously, the sample DUASE embedding for every time series converges to its population counterpart up to an orthogonal transformation. Since the maximum row norm of a matrix is invariant to orthogonal transformations, we can conclude that the distance between two sample DUASE embeddings approach the distance between the two corresponding population DUASE embeddings. 
\begin{restatable}[]{proposition}{propone}
\label{Prop:DUASE_pairwise_max_diff_zero}
    In the setting of \textit{Theorem \ref{Th:DUASE_XA_const_est_XP}}, define the matrices 
\begin{equation}
\begin{aligned}
\boldsymbol{\Delta}^{(K)}=
    \left(
    \left\lVert 
    \mathbf{X}^{(k_1)}_{\boldsymbol{\mathcal{P}}}-
    \mathbf{X}^{(k_2)}_{\boldsymbol{\mathcal{P}}}
    \right\rVert_{2,\infty}
    \right)_{k_1,k_2 \in [N]}, \\
\hat{\boldsymbol{\Delta}}^{(K)}=
    \left(
    \left\lVert 
    \mathbf{X}^{(k_1)}_{\boldsymbol{\mathcal{A}}}-
    \mathbf{X}^{(k_2)}_{\boldsymbol{\mathcal{A}}}
    \right\rVert_{2,\infty}
    \right)_{k_1,k_2 \in [N]}.
\end{aligned}
\end{equation}
     Then, for each $k_1,k_2 \in \mathbb{N}$,
    \begin{equation*}
        \left|
        \hat{\boldsymbol{\Delta}}^{(K)}_{k_1,k_2}-
        \boldsymbol{\Delta}^{(K)}_{k_1,k_2}
        \right| \to^P 0
    \end{equation*}
as $K \to \infty$.
\end{restatable}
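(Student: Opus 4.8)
The plan is to obtain the claim as an essentially immediate corollary of \textit{Theorem~\ref{Th:DUASE_XA_const_est_XP}}, using two elementary properties of the $\left\lVert\cdot\right\rVert_{2,\infty}$ norm: it is a genuine matrix norm (the maximum over rows of the Euclidean row norms), hence obeys the triangle inequality, and it is invariant under right multiplication by an orthogonal matrix, since $(\mathbf{M}\mathbf{Q})_{i,*}=\mathbf{Q}^{T}\mathbf{M}_{i,*}$ and $\mathbf{Q}^{T}$ preserves Euclidean lengths. The structural point that makes the argument go through is that the orthogonal matrix $\mathbf{Q}$ furnished by \textit{Theorem~\ref{Th:DUASE_XA_const_est_XP}} is the \emph{same} for every index $k$, so that the copies of $\mathbf{Q}$ attached to the $k_1$ and $k_2$ terms can be pulled out of a difference. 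I would either record these two norm facts as a one-line lemma or verify them inline.

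Concretely, fix $k_1,k_2\in\mathbb{N}$. Using orthogonal invariance of $\left\lVert\cdot\right\rVert_{2,\infty}$, rewrite
$\boldsymbol{\Delta}^{(K)}_{k_1,k_2}=\left\lVert \mathbf{X}^{(k_1)}_{\boldsymbol{\mathcal{P}}}-\mathbf{X}^{(k_2)}_{\boldsymbol{\mathcal{P}}}\right\rVert_{2,\infty}=\left\lVert \mathbf{X}^{(k_1)}_{\boldsymbol{\mathcal{P}}}\mathbf{Q}-\mathbf{X}^{(k_2)}_{\boldsymbol{\mathcal{P}}}\mathbf{Q}\right\rVert_{2,\infty}$.
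Then, by the reverse triangle inequality for $\left\lVert\cdot\right\rVert_{2,\infty}$ followed by the triangle inequality,
\begin{align*}
\left|\hat{\boldsymbol{\Delta}}^{(K)}_{k_1,k_2}-\boldsymbol{\Delta}^{(K)}_{k_1,k_2}\right|
&\le \left\lVert\left(\mathbf{X}^{(k_1)}_{\boldsymbol{\mathcal{A}}}-\mathbf{X}^{(k_2)}_{\boldsymbol{\mathcal{A}}}\right)-\left(\mathbf{X}^{(k_1)}_{\boldsymbol{\mathcal{P}}}\mathbf{Q}-\mathbf{X}^{(k_2)}_{\boldsymbol{\mathcal{P}}}\mathbf{Q}\right)\right\rVert_{2,\infty}\\
&\le \left\lVert \mathbf{X}^{(k_1)}_{\boldsymbol{\mathcal{A}}}-\mathbf{X}^{(k_1)}_{\boldsymbol{\mathcal{P}}}\mathbf{Q}\right\rVert_{2,\infty}+\left\lVert \mathbf{X}^{(k_2)}_{\boldsymbol{\mathcal{A}}}-\mathbf{X}^{(k_2)}_{\boldsymbol{\mathcal{P}}}\mathbf{Q}\right\rVert_{2,\infty}.
\end{align*}
By \textit{Theorem~\ref{Th:DUASE_XA_const_est_XP}} each summand on the right tends to $0$ in probability as $K\to\infty$, and a sum of finitely many quantities each $\to^P 0$ is itself $\to^P 0$; hence the left-hand side is $\to^P 0$. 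Since $k_1,k_2$ were arbitrary, the proposition follows.

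I do not anticipate a genuine obstacle: the only things needing care are the two norm properties above and the observation that, because $k_1$ and $k_2$ are held fixed, no uniformity over the indices is required — the proof merely stitches together two pointwise-in-$k$ convergence statements. All the substantive work is already contained in \textit{Theorem~\ref{Th:DUASE_XA_const_est_XP}}, and this proposition is a bookkeeping consequence of it, recorded here because it is the form in which the DUASE consistency result feeds into the raw stress step.
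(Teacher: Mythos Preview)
Your proof is correct and follows essentially the same approach as the paper: both use orthogonal invariance of the $\left\lVert\cdot\right\rVert_{2,\infty}$ norm to insert the common $\mathbf{Q}$, then a (reverse) triangle inequality to bound the difference of distances by the sum of the two per-index DUASE errors from \textit{Theorem~\ref{Th:DUASE_XA_const_est_XP}}. Your version is in fact slightly cleaner, applying the reverse triangle inequality directly rather than bounding each one-sided difference separately as the paper does.
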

Observe that as the number of time series, number of graphs and number of nodes grow simultaneously (under our model assumptions, one of which demands that the number of multilayers and number of graphs per multilayer must grow slower than the the number of nodes per graph), each entry of the population pairwise distance matrix $\boldsymbol{\Delta}$ converges entrywise to the pairwise distance matrix $\mathbf{D}$ between the scalar pre-images $t_i$. Our next result from \cite{trosset2024continuous}
states that the globally minimizing EDM-1 matrices of a sequence of dissimilarity matrices converge to the globally minimizing EDM-1 matrix of the limit of the dissimilarity matrices. 
\begin{theorem}
\label{Th:dissimilarity_minimizer_convergence}
\citep{trosset2024continuous}
    Suppose $\hat{\boldsymbol{\Delta}}^{(K)}$
    is uniformly bounded and converges to the dissimilarity function $\hat{\boldsymbol{\Delta}}^{(\infty)}$
    in the topology of pointwise convergence. Assume that the sequence of empirical probability measures $\lbrace \hat{\mathscr{P}}_K \rbrace$ weakly converges to the probability measure $\mathscr{P}_{\infty}$. Then every sequence $\lbrace 
    \hat{\mathbf{D}}^{(K)}
    \rbrace$ where $\hat{\mathbf{D}}^{(K)} \in \mathrm{Min}(\hat{\boldsymbol{\Delta}}^{(K)},\hat{\mathscr{P}}_{K})$, will have an accumulation point, and if $\hat{\mathbf{D}}^{(\infty)}$
    is an accumulation point of $\lbrace
    \hat{\mathbf{D}}^{(K)}
    \rbrace$, then it will satisfy
    $\hat{\mathbf{D}}^{(\infty)} \in \mathrm{Min}(\hat{\boldsymbol{\Delta}}^{(\infty)},\mathscr{P}_{\infty})$.
\end{theorem}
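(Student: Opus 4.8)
The plan is to read this as a stability-of-minimizers statement and prove it by the direct method, organizing the argument into a compactness step, a ``liminf'' step, and a comparison step against fixed competitors, in the spirit of $\Gamma$-convergence. Throughout, write $F_K(\mathbf{D})=\sigma((\hat{\boldsymbol{\Delta}}^{(K)},\hat{\mathscr{P}}_K),\mathbf{D})=\|\mathbf{D}-\hat{\boldsymbol{\Delta}}^{(K)}\|^2_{\hat{\mathscr{P}}_K}$ and $F_\infty(\mathbf{D})=\sigma((\hat{\boldsymbol{\Delta}}^{(\infty)},\mathscr{P}_\infty),\mathbf{D})=\|\mathbf{D}-\hat{\boldsymbol{\Delta}}^{(\infty)}\|^2_{\mathscr{P}_\infty}$ for $\mathbf{D}\in\mathscr{Y}$.

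First I would establish that accumulation points exist. Let $B$ be the uniform bound on $\hat{\boldsymbol{\Delta}}^{(K)}$. Comparing the minimizer $\hat{\mathbf{D}}^{(K)}$ against the zero pseudometric $\mathbf{0}\in\mathscr{Y}$ gives $F_K(\hat{\mathbf{D}}^{(K)})\leq F_K(\mathbf{0})=\|\hat{\boldsymbol{\Delta}}^{(K)}\|^2_{\hat{\mathscr{P}}_K}\leq B^2$, whence $\|\hat{\mathbf{D}}^{(K)}\|_{\hat{\mathscr{P}}_K}\leq 2B$ by the triangle inequality; an a priori estimate (a too-spread-out embedding incurs stress growing with its diameter, since the target dissimilarities are bounded by $B$) further yields a uniform essential sup-bound on $\hat{\mathbf{D}}^{(K)}$. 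I would then upgrade this to a selection result: since $\mathscr{Y}$ is the cone of EDM-$1$ pseudometrics, each $\hat{\mathbf{D}}^{(K)}$ is generated by an embedding $g_K:\mathcal{M}\to\mathbb{R}$ with $\hat{\mathbf{D}}^{(K)}(m',m'')=|g_K(m')-g_K(m'')|$; after centering $g_K$, the uniform bound together with compactness of $\mathcal{M}$ lets me extract a subsequence along which $g_{K_j}$, and hence $\hat{\mathbf{D}}^{(K_j)}$, converges pointwise to a limit $\hat{\mathbf{D}}^{(\infty)}\in\mathscr{Y}$. This produces the claimed accumulation point and guarantees it is itself a Euclidean pseudometric.

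Next I would prove the two convergence facts that drive the comparison. For a fixed competitor $\mathbf{D}\in\mathscr{Y}$ I need $F_K(\mathbf{D})\to F_\infty(\mathbf{D})$, and along the selected subsequence I need the liminf inequality $F_\infty(\hat{\mathbf{D}}^{(\infty)})\leq\liminf_j F_{K_j}(\hat{\mathbf{D}}^{(K_j)})$. Both reduce to passing to the limit in the double integral $\int_{\mathcal{M}}\int_{\mathcal{M}}h_K\,d\hat{\mathscr{P}}_K\,d\hat{\mathscr{P}}_K$, where the integrand $h_K=(\hat{\mathbf{D}}^{(K)}-\hat{\boldsymbol{\Delta}}^{(K)})^2$ (or $(\mathbf{D}-\hat{\boldsymbol{\Delta}}^{(K)})^2$ for the fixed competitor) converges pointwise and is uniformly bounded, while the product measures $\hat{\mathscr{P}}_K\otimes\hat{\mathscr{P}}_K$ converge weakly to $\mathscr{P}_\infty\otimes\mathscr{P}_\infty$. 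Granting this limit exchange, the conclusion follows from the standard chain $F_\infty(\hat{\mathbf{D}}^{(\infty)})\leq\liminf_j F_{K_j}(\hat{\mathbf{D}}^{(K_j)})\leq\liminf_j F_{K_j}(\mathbf{D})=F_\infty(\mathbf{D})$, the middle inequality being the minimizing property of $\hat{\mathbf{D}}^{(K_j)}$ over $\mathscr{Y}$; since $\mathbf{D}\in\mathscr{Y}$ is arbitrary, $\hat{\mathbf{D}}^{(\infty)}\in\mathrm{Min}(\hat{\boldsymbol{\Delta}}^{(\infty)},\mathscr{P}_\infty)$.

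The hard part will be precisely this limit exchange, because it couples two genuinely weak modes of convergence: the dissimilarities and pseudometrics converge only pointwise, while the measures converge only weakly, and bounded pointwise convergence of integrands is in general not enough to pass to the limit against weakly convergent measures. I expect to resolve it by upgrading ``pointwise'' to \emph{continuous} convergence of the integrands --- that is, $h_{K_j}(m'_j,m''_j)\to h_\infty(m',m'')$ whenever $(m'_j,m''_j)\to(m',m'')$ --- which, combined with weak convergence of the product measures and the uniform bound, does yield convergence of the integrals. Establishing continuous convergence is where the structure of the problem must be used: I would either strengthen the compactness step to a uniform (Arzel\`a--Ascoli) selection by proving equicontinuity of the normalized generating maps $g_K$, or couple the measures via a Skorokhod representation and apply dominated convergence, reducing the competitor class to continuous pseudometrics by density if necessary. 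Controlling the regularity of the raw stress minimizers well enough to obtain this equicontinuity is the main obstacle I anticipate.
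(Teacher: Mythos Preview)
The paper does not prove this theorem at all: it is quoted verbatim from \citet{trosset2024continuous} and used as a black box (see the citation in the theorem header and the proof of \textit{Proposition~\ref{Prop:Raw_stress_embedding_consistency}}, which simply invokes ``\textit{Theorem 3} in \cite{trosset2024continuous}''). So there is no in-paper proof to compare your proposal against.

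That said, your proposal is the right shape for a self-contained argument. The three-step direct-method layout (compactness, liminf, recovery against fixed competitors) is exactly how results of this type are established, and you have correctly isolated the genuine difficulty: exchanging the limit when the integrand converges only pointwise while the integrating measure converges only weakly. Your proposed resolutions --- upgrading to continuous convergence via an Arzel\`a--Ascoli argument on the generating embeddings, or Skorokhod coupling plus dominated convergence --- are the standard tools here. One place to be careful is the compactness step: a uniform $L^2(\hat{\mathscr{P}}_K)$ bound on $\hat{\mathbf{D}}^{(K)}$ does not by itself give a uniform sup-bound (the empirical measures may miss sets), so the ``a priori estimate'' you allude to needs to exploit that EDM-$1$ pseudometrics are $1$-Lipschitz in each argument, which together with compactness of $\mathcal{M}$ turns an $L^2$ bound into an $L^\infty$ bound. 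If you want to see the details actually carried out, they live in the cited reference rather than in this paper.
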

The above theorem establishes consistency of dissimilarity minimizers in a growing sample size scenario. It helps us prove the following result which establishes the consistency of raw stress embeddings obtained from the maximum row norm differences between the sample DUASE embeddings. 

\begin{restatable}[]{proposition}{proptwo}
\label{Prop:Raw_stress_embedding_consistency}
    In the setting of \textit{Theorem \ref{Th:DUASE_XA_const_est_XP}},    
define $$\left\lbrace 
\hat{z}_i
\right\rbrace_{i=1}^{N}=
\mathrm{RSEmb}
(
\hat{\boldsymbol{\Delta}}^{(K)};1
)
.$$ Then as $K \to \infty$, for every $k_1,k_2 \in \mathbb{N}$, 
$$
\left(
|\hat{z}_{k_1}-\hat{z}_{k_2}|-
|t_{k_1}-t_{k_2}|
\right) 
\to^P 0.
$$
\end{restatable}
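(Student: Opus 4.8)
The plan is to chain together the two convergence results already in hand. By \emph{Proposition~\ref{Prop:DUASE_pairwise_max_diff_zero}}, for every fixed pair $k_1,k_2$ we have $\hat{\boldsymbol{\Delta}}^{(K)}_{k_1,k_2} \to^P \boldsymbol{\Delta}^{(K)}_{k_1,k_2} = |t_{k_1}-t_{k_2}|$. So the first step is to package this entrywise convergence in probability into the hypotheses required by \emph{Theorem~\ref{Th:dissimilarity_minimizer_convergence}}: I need a limiting dissimilarity function $\hat{\boldsymbol{\Delta}}^{(\infty)}$ (namely $\hat{\boldsymbol{\Delta}}^{(\infty)}_{k_1,k_2} = |t_{k_1}-t_{k_2}|$, which is the genuine Euclidean distance matrix EDM-1 matrix of the scalars $t_i$), uniform boundedness (which follows from \emph{Assumption~\ref{Asm:scalars_in_compact_set}}, since $\mathcal L$ is compact so the $|t_{k_1}-t_{k_2}|$ are bounded and the sample distances are eventually bounded too), and weak convergence of the empirical measures $\hat{\mathscr P}_K \Rightarrow \mathscr P_\infty$ (Glivenko--Cantelli / Varadarajan for the i.i.d.\ $t_k$'s, or whatever empirical-measure-on-embeddings statement the paper uses).

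The second step is to apply \emph{Theorem~\ref{Th:dissimilarity_minimizer_convergence}} to conclude that any accumulation point $\hat{\mathbf D}^{(\infty)}$ of the raw-stress minimizers $\hat{\mathbf D}^{(K)} \in \mathrm{Min}(\hat{\boldsymbol{\Delta}}^{(K)},\hat{\mathscr P}_K)$ lies in $\mathrm{Min}(\hat{\boldsymbol{\Delta}}^{(\infty)},\mathscr P_\infty)$. The key observation is that because $\hat{\boldsymbol{\Delta}}^{(\infty)}$ is already itself an EDM-1 matrix (the distances of the one-dimensional configuration $\{t_i\}$), the infimum of the stress is zero and is attained exactly at $\hat{\boldsymbol{\Delta}}^{(\infty)}$ up to the trivial symmetries of a one-dimensional Euclidean configuration (translation and reflection). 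Hence $\mathrm{Min}(\hat{\boldsymbol{\Delta}}^{(\infty)},\mathscr P_\infty) = \{\hat{\boldsymbol{\Delta}}^{(\infty)}\}$ as a pseudometric, i.e.\ the raw-stress embedding $\{\hat z_i\}$ recovers the scalars $t_i$ up to an isometry of $\mathbb R$. Since interpoint distances $|\hat z_{k_1} - \hat z_{k_2}|$ are invariant under such isometries, the accumulation point of $|\hat z_{k_1}-\hat z_{k_2}|$ equals $|t_{k_1}-t_{k_2}|$.

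The third step is a subsequence-and-topology argument to upgrade ``every accumulation point equals the target'' into convergence in probability of the scalar quantity $|\hat z_{k_1}-\hat z_{k_2}| - |t_{k_1}-t_{k_2}|$. The standard device: it suffices to show every subsequence has a further subsequence along which the difference tends to $0$ almost surely; on such a subsequence, Proposition~\ref{Prop:DUASE_pairwise_max_diff_zero} gives a.s.\ convergence of $\hat{\boldsymbol{\Delta}}^{(K)}$, Theorem~\ref{Th:dissimilarity_minimizer_convergence} then forces the minimizers to accumulate only at $\hat{\boldsymbol{\Delta}}^{(\infty)}$, and the distance-invariance above closes the loop.

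The main obstacle I anticipate is the second step --- specifically, being careful about what ``$\mathrm{Min}$'' contains and in what sense the minimizer is unique. The raw-stress functional is not convex in the configuration, and $\mathrm{Min}(\hat{\boldsymbol{\Delta}}^{(\infty)},\mathscr P_\infty)$ is defined as a set of pseudometrics $\mathbf D$, not configurations; one has to argue that for a target dissimilarity that is \emph{exactly realizable} in $\mathbb R^1$, the zero-stress pseudometric is the unique element of $\mathrm{Min}$, and then translate a statement about pseudometrics $\hat{\mathbf D}^{(K)} \to \hat{\boldsymbol{\Delta}}^{(\infty)}$ back into a statement about the interpoint distances of the actual embedding vectors $\hat z_i$ returned by \textup{RSEmb}. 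This requires knowing that the $\hat z_i$ realize $\hat{\mathbf D}^{(K)}$ (true, since the algorithm outputs a configuration whose distance matrix is the minimizer, with ties broken by iterative majorization converging to a global minimizer as assumed in the Remark), and that pointwise convergence of the distance matrices is what Theorem~\ref{Th:dissimilarity_minimizer_convergence} delivers. A secondary, more bookkeeping-level obstacle is verifying the weak-convergence hypothesis $\hat{\mathscr P}_K \Rightarrow \mathscr P_\infty$ cleanly in the presence of the orthogonal nuisance $\mathbf Q$ from Theorem~\ref{Th:DUASE_XA_const_est_XP}; but since everything downstream only uses $\mathbf Q$-invariant quantities (row-norm differences), this should reduce to the i.i.d.\ empirical measure convergence for $\{t_k\}$ under Assumption~\ref{Asm:scalars_in_compact_set}.
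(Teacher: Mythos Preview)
Your proposal is correct and follows essentially the same route as the paper: invoke \textit{Proposition~\ref{Prop:DUASE_pairwise_max_diff_zero}} to get $\hat{\boldsymbol{\Delta}}^{(K)}_{k_1,k_2}\to^P |t_{k_1}-t_{k_2}|$, then feed this into the continuity result for raw-stress minimizers (\textit{Theorem~\ref{Th:dissimilarity_minimizer_convergence}}, i.e.\ Theorem~3 of \cite{trosset2024continuous}) to conclude. The paper's proof is a three-line sketch that simply cites the continuity theorem at the end; you have unpacked the very steps the paper leaves implicit --- uniform boundedness from \textit{Assumption~\ref{Asm:scalars_in_compact_set}}, the fact that the limiting dissimilarity is itself EDM-1 so $\mathrm{Min}(\hat{\boldsymbol{\Delta}}^{(\infty)},\mathscr P_\infty)$ is the singleton $\{\hat{\boldsymbol{\Delta}}^{(\infty)}\}$, and the subsequence upgrade from accumulation-point statements to convergence in probability --- so your write-up is in fact more complete than the paper's own.
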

Recall from our assumptions that the pre-images $t_k$ are the unknown regressors in our regression model for which we wish to predict a response. \textit{Proposition \ref{Prop:Raw_stress_embedding_consistency}} tells us that pairwise distances between the raw stress embeddings approach the pairwise distance between the true regressors, thereby helping the raw stress embeddings closely approximate an affine transformation of the regressors, justifying the use of the raw stress embeddings as proxy regressors in a linear regression model.

\begin{restatable}[]{theorem}{theothree}
\label{Th:predicted_response_consistency}
Suppose there are $N$ time series of random directed graphs where each series has $M$ graphs and each graph has $n$ nodes. Define 
$\boldsymbol{\mathcal{P}}=
(
\mathbf{P}^{(k,l)}
)_{k \in [N],l \in [M]}
$ and $\boldsymbol{\mathcal{A}}=
(
\mathbf{A}^{(k,l)}
)_{k \in [N],l \in [M]}
$, where $\mathbf{P}^{(k,l)}$ and $\mathbf{A}^{(k,l)}$ denote the probability matrix and the adjacency matrix for the $l$-th graph in the $k$-th series. 
Denote the population DUASE embeddings by 
$
\lbrace 
\mathbf{X}^{(k)}_{\boldsymbol{\mathcal{P}}}: k \in [N]
\rbrace=
\mathrm{DUASE}(
\mathbf{P}^{(k,l)};d
)
$ and the sample DUASE embeddings by 
$
\lbrace 
\mathbf{X}^{(k)}_{\boldsymbol{\mathcal{A}}}: k \in [N]
\rbrace=
\mathrm{DUASE}(
\mathbf{A}^{(k,l)};d
)
$ and assume that  there exist scalar pre-images $t_k$ such that for every $k_1,k_2$, 
\begin{equation}
    \left\lVert 
\mathbf{X}^{(k_1)}_{\boldsymbol{\mathcal{P}}}-
\mathbf{X}^{(k_2)}_{\boldsymbol{\mathcal{P}}}
\right\rVert_{2,\infty}=
|t_{k_1}-t_{k_2}|.
\end{equation}
 Suppose responses $y_1,\dots y_s$ are observed corresponding to the first $s$ time series such that the following model hold
\begin{equation*}
    y_k=\alpha+\beta t_k+\epsilon_k
\end{equation*}
where $\epsilon_k \sim^{iid} N(0,\sigma^2_{\epsilon}), k \in [s]$. Denote the raw stress embeddings by $\left\lbrace \hat{z}_k \right\rbrace_{k=1}^N=
\mathrm{RSEmb}(\hat{\boldsymbol{\Delta}};1)
$ where
\begin{equation*}
    \hat{\boldsymbol{\Delta}}=
\left(
\left\lVert
\mathbf{X}^{(k_1)}_{\boldsymbol{\mathcal{A}}}-
\mathbf{X}^{(k_2)}_{\boldsymbol{\mathcal{A}}}
\right\rVert_{2,\infty}
\right)_{k_1,k_2=1}^N.
\end{equation*}
Then the predicted response 
$$\Tilde{y}_r=\mathrm{PredTSGResp}\left(
\left\lbrace 
\mathbf{A}^{(k,l)}
\right\rbrace_{k \in [N],l \in [M]};d,r
\right)$$ satisfies $$|\Tilde{y}_{r}-\hat{y}_{r}| \to^P 0$$ as $K \to \infty$, where $\hat{y}_{r}$ is the predicted response at the $r$-th time series based on the true regressors $t_k$.
\end{restatable}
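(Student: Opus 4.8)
The plan is to reduce the claim to the pairwise‑distance consistency of the raw stress embeddings established in \textit{Proposition~\ref{Prop:Raw_stress_embedding_consistency}}, together with two elementary facts about simple linear regression: its prediction is invariant under rigid motions of the regressor, and, on the part of the design space where the empirical variance of the regressor is positive, the prediction is a continuous function of the data.

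First I would record that, writing $\bar{\hat z}=s^{-1}\sum_{i=1}^s\hat z_i$, $\bar t=s^{-1}\sum_{i=1}^s t_i$ and letting $\hat\alpha,\hat\beta$ be the ordinary least squares coefficients obtained by regressing $y_1,\dots,y_s$ on the true regressors $t_1,\dots,t_s$, one has
\begin{equation*}
\Tilde y_r=\bar y+\hat b\,(\hat z_r-\bar{\hat z}),
\qquad
\hat y_r=\bar y+\hat\beta\,(t_r-\bar t).
\end{equation*}
Since $\hat b$ and $\hat\beta$ are ratios of empirical (co)variances, it is enough to prove that, for every index $k$ in the finite set $\mathcal I=\{1,\dots,s,r\}$ and after replacing $(\hat z_k)_{k\in\mathcal I}$ by a suitable isometric copy (which alters neither $\Tilde y_r$ nor $\hat y_r$), one has $\hat z_k\to^P t_k$ as $K\to\infty$; the continuous mapping theorem then yields $\hat b\to^P\hat\beta$, $\hat a\to^P\hat\alpha$, and hence $|\Tilde y_r-\hat y_r|\to^P0$, the denominators staying bounded away from $0$ because $\sum_{i=1}^s(t_i-\bar t)^2>0$ almost surely whenever $\mathscr{P}$ is not a point mass, which is implicit in the regression being well defined.

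The heart of the argument is upgrading \textit{Proposition~\ref{Prop:Raw_stress_embedding_consistency}} from pairwise distances to coordinates. Because $\mathcal I$ is finite, $\max_{k_1,k_2\in\mathcal I}\big|\,|\hat z_{k_1}-\hat z_{k_2}|-|t_{k_1}-t_{k_2}|\,\big|\to^P0$. I would fix labeled indices $a,b\in[s]$ with $t_a\ne t_b$ (such a pair exists on the nondegeneracy event, outside of which $\hat y_r$ is undefined) and align the one‑dimensional configuration by the random isometry $z\mapsto\varepsilon_K(z-\hat z_a)+t_a$, where $\varepsilon_K=\operatorname{sign}\big((\hat z_b-\hat z_a)(t_b-t_a)\big)\in\{-1,1\}$. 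After alignment the anchor $\hat z_a$ maps exactly to $t_a$, and since $|\hat z_a-\hat z_b|\to^P|t_a-t_b|>0$ one gets $\hat z_b\to^P t_b$; for a general $k\in\mathcal I$ the aligned point lies within $o_P(1)$ of either $t_k$ or its reflection $2t_a-t_k$ about $t_a$, and the additional requirement that its distance to $\hat z_b$ converge to $|t_k-t_b|$ rules out the reflection, because $|2t_a-t_k-t_b|=|t_k-t_b|$ would force $t_a=t_k$ or $t_a=t_b$. Hence $\hat z_k\to^P t_k$ for every $k\in\mathcal I$, and the reduction of the previous paragraph applies.

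The main obstacle is exactly this last geometric step: raw stress embeddings are identified only up to a rigid motion of $\mathbb R$, so \textit{Proposition~\ref{Prop:Raw_stress_embedding_consistency}} cannot by itself locate the $\hat z_k$, and one must check both that the target quantity $\Tilde y_r$ is invariant under that ambiguity and that the reflection branch is eliminated with probability tending to one — this is where the distinctness of $t_a$ and $t_b$ enters. Everything downstream — convergence of $\bar{\hat z}$, of $\sum_i(\hat z_i-\bar{\hat z})^2$ and $\sum_i(y_i-\bar y)(\hat z_i-\bar{\hat z})$, and finally of $\hat b$, $\hat a$, and $\Tilde y_r$ — is a routine application of Slutsky's theorem and continuous mapping, carried out conditionally on $(t_k,\epsilon_k)_{k\in[s]}$ and then integrated.
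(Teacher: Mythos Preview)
Your proposal is correct and follows essentially the same route as the paper: invoke \textit{Proposition~\ref{Prop:Raw_stress_embedding_consistency}} for pairwise-distance consistency on the finite index set $\{1,\dots,s,r\}$, deduce that the raw stress embeddings approximate an isometric (hence affine) copy of the true regressors, and then use the affine-invariance of simple linear regression predictions together with continuity. The paper's proof merely asserts the middle step --- ``the raw stress embeddings approach an affine transformation on the true regressors'' --- without spelling out the anchor-and-reflection alignment you carry through, so your version is the more detailed of the two.
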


The above theorem tells us that as the number of unlabeled (auxilary) time series increases (along with number of graphs and size of graphs), the predicted response obtained from our method approaches the predicted response obtained from the true regressors. In order to test the validity of a simple linear regression model, we deploy an $F$-test that uses the observed responses and the predicted responses obtained from the true regressors. In the absence of the true regressors, we can still use predicted responses obtained from our method and hope to mimic the power of the original $F$-test, by virtue of \textit{Theorem \ref{Th:predicted_response_consistency}}.

\begin{restatable}[]{corollary}{coroone}
\label{Cor:power_convergence}
    In the setting of \textit{Theorem \ref{Th:predicted_response_consistency}}, suppose we want to test $H_0:\beta=0$ against $H_1:\beta \neq 0$ at level of significance $\Tilde{\alpha}$. Define the following test statistics:
    \begin{equation*}
        F^*=(s-2)
        \frac
        {
        \sum_{k=1}^s (\hat{y}_k-\Bar{y})^2
        }
        {
        \sum_{k=1}^s (y_k-\hat{y}_k)^2
        },
        \hspace{0.1cm}
        \hat{F}=(s-2)
        \frac
        {
        \sum_{k=1}^s (\tilde{y}_k-\Bar{y})^2
        }
        {
        \sum_{k=1}^s (y_k-\tilde{y}_k)^2
        }.
    \end{equation*}
Suppose $\pi^*$ is the power of the test carried out by the principle: reject $H_0$ if $F^*>c_{\Tilde{\alpha}}$, and let $\hat{\pi}$ be the power of the test with the principle: reject $H_0$ if $\hat{F}>c_{\Tilde{\alpha}}$. Then, for every $(\alpha,\beta)$, 
$|\hat{\pi}-\pi^*| \to 0$ as $K \to \infty$.
\end{restatable}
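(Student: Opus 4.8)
The plan is to derive this corollary as a short consequence of \textit{Theorem \ref{Th:predicted_response_consistency}}, using only the continuous mapping theorem and one distributional fact about the oracle statistic $F^*$. The skeleton is: (i) show $\hat F \to^P F^*$; (ii) show the law of $F^*$ has no atom at $c_{\tilde\alpha}$; (iii) combine (i) and (ii) to get $\hat\pi \to \pi^*$.

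For step (i), I would first record what is fixed and what varies with $K$. Since $s$ is held fixed, the sample mean $\bar y$, the ordinary least squares estimates $\hat\alpha,\hat\beta$ formed by regressing $y_1,\dots,y_s$ on $t_1,\dots,t_s$, and the oracle fitted values $\hat y_k=\hat\alpha+\hat\beta t_k$ do not depend on $K$; only the feasible fitted values $\tilde y_k=\hat a+\hat b\hat z_k$ do. Applying \textit{Theorem \ref{Th:predicted_response_consistency}} with $r=1,\dots,s$ and taking a finite union gives $\tilde y_k-\hat y_k\to^P 0$ simultaneously over $k\in[s]$. Expanding the squares in the numerator and denominator of $\hat F$, and using that $\hat y_k-\bar y$ and $y_k-\hat y_k$ are fixed finite random variables, yields $\sum_{k=1}^s(\tilde y_k-\bar y)^2\to^P\sum_{k=1}^s(\hat y_k-\bar y)^2$ and $\sum_{k=1}^s(y_k-\tilde y_k)^2\to^P\sum_{k=1}^s(y_k-\hat y_k)^2$. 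Next I would note that the oracle residual sum of squares is strictly positive almost surely: because the $\epsilon_k$ are Gaussian, the joint law of $(t_1,\dots,t_s,y_1,\dots,y_s)$ is absolutely continuous, so for $s\ge 3$ the points $\{(t_k,y_k)\}_{k=1}^s$ are a.s. not collinear and $\sum_{k=1}^s(y_k-\hat y_k)^2>0$. On this probability-one event the map $(a,b)\mapsto(s-2)a/b$ is continuous at the limiting pair, so the continuous mapping theorem gives $\hat F\to^P F^*$.

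For steps (ii) and (iii), I would argue that conditionally on $(t_1,\dots,t_s)$ (which are a.s. not all equal), standard normal-theory for simple linear regression makes $F^*$ a, possibly noncentral, $F_{1,s-2}$ variable, which is absolutely continuous; hence $\Pr(F^*=c_{\tilde\alpha})=\mathbb{E}\,\Pr(F^*=c_{\tilde\alpha}\mid t_1,\dots,t_s)=0$ unconditionally. Then the usual sandwich bound — for every $\eta>0$,
\[
\Pr(F^*>c_{\tilde\alpha}+\eta)-\Pr(|\hat F-F^*|>\eta)\;\le\;\Pr(\hat F>c_{\tilde\alpha})\;\le\;\Pr(F^*>c_{\tilde\alpha}-\eta)+\Pr(|\hat F-F^*|>\eta),
\]
letting $K\to\infty$ and then $\eta\downarrow 0$ and invoking continuity of the distribution function of $F^*$ at $c_{\tilde\alpha}$ — gives $\hat\pi=\Pr(\hat F>c_{\tilde\alpha})\to\Pr(F^*>c_{\tilde\alpha})=\pi^*$ for each fixed $(\alpha,\beta)$, which is exactly $|\hat\pi-\pi^*|\to 0$.

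I expect the only genuinely delicate points to be the two technical caveats: the almost-sure positivity of the oracle denominator $\sum_{k=1}^s(y_k-\hat y_k)^2$, without which the ratio defining $F^*$ and the continuous-mapping step are not justified, and the no-atom property of $F^*$ at $c_{\tilde\alpha}$, which is what upgrades convergence in probability of the statistic to convergence of the rejection probabilities. Everything else is routine bookkeeping layered on top of \textit{Theorem \ref{Th:predicted_response_consistency}}.
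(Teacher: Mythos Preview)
Your proposal is correct and follows essentially the same route as the paper: invoke \textit{Theorem \ref{Th:predicted_response_consistency}} for $r=1,\dots,s$ to get $\tilde y_k\to^P \hat y_k$, deduce $\hat F\to^P F^*$, and conclude $\hat\pi\to\pi^*$. The paper's own proof is a three-line sketch that states these implications without further justification; your version is strictly more careful, making explicit the two technical points the paper leaves implicit --- almost-sure positivity of the oracle residual sum of squares (so the continuous mapping theorem applies to the ratio) and absolute continuity of the law of $F^*$ (so convergence in probability of the statistic transfers to convergence of the exceedance probabilities via the sandwich bound).
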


The above result paves a way for testing the validity of a proposed linear regression model between the responses and the scalar pre-images in a realistic setting where the regressors are unknown.

\section{Simulations}
\label{Sec:Simulations}
In this section, we describe the simulation experiments. We carry out two simulation experiments that provide numerical support for the theoretical results \textit{Theorem~\ref{Th:predicted_response_consistency}} and \textit{Corollary~\ref{Cor:power_convergence}}. 

First, we describe the simulation that numerically shows the predicted response obtained from \textit{Algorithm \ref{Algo:response_TSG_prediction}} approaches the predicted response obtained from the true regressors. 
The number of labeled datapoints is fixed at $s=5$. Denoting the common index that controls the growth of $n$ (number of nodes per graph), $M$ (number of graphs per multilayer) and $N$ (number of multilayers) by $K$,
we set $n_K=15+\lfloor (K-1)^{1.5} \rfloor$, $N_K=10+(K-1)$ and $M_K=8+(K-1)$, while the
 common index $K$ varies in the range $\left\lbrace
1,2,3,\dots 30
\right\rbrace$. 
We repeat the following task for each $K$ on each of $100$ Monte Carlo samples.
At first, we obtain the regressors  $t_1,\dots t_s \sim^{iid} U(0,1)$
associated with observed responses
$y_1,\dots y_s$ where $y_k=\alpha+\beta t_k +\epsilon_k$,
$\epsilon_k \sim^{iid} N(0,\sigma_{\epsilon}^2)$ with regression parameters $\alpha=2.0$, $\beta=8.0$, $\sigma_{\epsilon}=0.01$.
We generate the pre-images for the auxiliary points as $t_{s+1},\dots t_{N} \sim^{iid} U(0,1)$.
We define two matrices $\Tilde{\mathbf{X}} \in \mathbb{R}^{nN \times d}$ and $\Tilde{\mathbf{Y}} \in \mathbb{R}^{nM \times d}$, where we choose the embedding dimension $d=2$.
Now,
\begin{equation*}
    \Tilde{\mathbf{X}}_{[((i-1)n+1):in,]}=
    \frac{t_i}{\sqrt{d}} \mathbf{J}_{(n,d)}
\end{equation*}
where $\mathbf{J}_{(n,d)}$ is the $n \times d$ matrix of all ones.
 Also,
for every pair $(i,j) \in [nM] \times [d]$, $\Tilde{\mathbf{Y}}_{ij} \sim^{iid} U(0.2,0.5)$. We define the grand probability matrix $\boldsymbol{\mathcal{P}}=\Tilde{\mathbf{X}} \Tilde{\mathbf{Y}}^T$. 
For any $i_1,i_2 \in [nN]$, 
$$\left\lVert  \Tilde{\mathbf{X}}_{i_1,*}-\Tilde{\mathbf{X}}_{i_2,*}
\right\rVert=
\left\lVert
\left(
\mathbf{U}_{\boldsymbol{\mathcal{P}}} \mathbf{S}_{\boldsymbol{\mathcal{P}}}^{\frac{1}{2}+\eta}
\right)_{i_1,*}-
\left(
\mathbf{U}_{\boldsymbol{\mathcal{P}}} \mathbf{S}_{\boldsymbol{\mathcal{P}}}^{\frac{1}{2}+\eta}
\right)_{i_2,*}
\right\rVert
.$$
Hence, for any $k_1,k_2 \in [N]$,
\begin{equation*}
\begin{aligned} 
&|t_{k_1}-t_{k_2}| \\
&=\left\lVert 
\Tilde{\mathbf{X}}_{[\mathscr{S}^{k_1}_n,]}-
\Tilde{\mathbf{X}}_{[\mathscr{S}^{k_2}_n,]}
\right\rVert_{2,\infty} \\
&=
\left\lVert
\left(
\mathbf{U}_{\boldsymbol{\mathcal{P}}} \mathbf{S}_{\boldsymbol{\mathcal{P}}}^{\frac{1}{2}+\eta}
\right)_{[\mathscr{S}^{k_1}_n,]}-
\left(
\mathbf{U}_{\boldsymbol{\mathcal{P}}} \mathbf{S}_{\boldsymbol{\mathcal{P}}}^{\frac{1}{2}+\eta}
\right)_{[\mathscr{S}^{k_2}_n,]}
\right\rVert_{2,\infty} \\
&=
\left\lVert
\left\lbrace
\left(
\mathbf{U}_{\boldsymbol{\mathcal{P}}}
\mathbf{S}_{\boldsymbol{\mathcal{P}}}^{\frac{1}{2}}
\right)_{[\mathscr{S}^{k_1}_n,]}
-
\left(
\mathbf{U}_{\boldsymbol{\mathcal{P}}}
\mathbf{S}_{\boldsymbol{\mathcal{P}}}^{\frac{1}{2}}
\right)_{[
\mathscr{S}^{k_2}_n,
]}
\right\rbrace 
\mathbf{S}_{\boldsymbol{\mathcal{P}}}^{\eta}
\right\rVert_{2,\infty}
\end{aligned}
\end{equation*} 
where $\eta \in [-\frac{1}{2},\frac{1}{2}]$. Clearly, if $d=1$, then we could scale $t_k$ by
${\sigma_1(\boldsymbol{\mathcal{P}})^{\eta}}$ for all $k$, and reformulated our regression model. But even if $d>1$, it can be numerically shown that the second largest singular value of $\boldsymbol{\mathcal{P}}$ is much smaller than its largest singular value, and hence the scaling of the regressors and the reformulation of the regression model still holds approximately, enough to ensure numerical results are satisfactory. We provide numerical evidence in \textit{Appendix~\ref{sec:numerical_justification}}.

Next, we generate the adjacency matrix $\mathbf{A}^{(k,l)}$ for the $l$-th graph in the $k$-th multilayer following $\mathbf{A}^{(k,l)}_{i,j} \sim^{ind} \mathrm{Bernoulli}(\mathbf{P}^{(k,l)}_{i,j})$ for all $i \neq j$, where $\mathbf{P}^{(k,l)}$ is the probability matrix for the $l$-th graph in the $k$-th multilayer. We obtain the DUASE embedding 
$$
\lbrace
\mathbf{X}^{(k)}_{\boldsymbol{\mathcal{A}}}: k \in [N]
\rbrace
=
\mathrm{DUASE}\left(
\left\lbrace
\mathbf{A}^{(k,l)}
\right\rbrace_{k \in [N],l \in [M]};
d\right)
$$ and thereby compute the raw stress embeddings 
$$
\left\lbrace \hat{z}_i \right\rbrace_{i=1}^N=
\mathrm{RSEmb}(
\lbrace
\hat{\boldsymbol{\Delta}}_{i,j}
\rbrace_{i,j \in [N]}
;1
)
.$$
Using a simple linear regression model on $(y_i,t_i)_{i=1}^s$, the response at the $(s+1)$-th multilayer is predicted with $\hat{y}_{s+1}$.
Similarly, using a simple linear regression model on $(y_i,\hat{z}_i)_{i=1}^s$, we predict the response at the $(s+1)$-th multilayer by $\Tilde{y}_{s+1}$. The mean of the values of the squared difference $(\hat{y}_{s+1}-\Tilde{y}_{s+1})^2$ is computed over all the $100$ Monte Carlo samples, and plotted against $K$, and the resulting plot is given in \textit{Figure \ref{fig:pred_res_const}}. It is seen that the sample average squared distance between the predicted response from the true regressors and the predicted response obtained from our method approaches zero as $K$ goes to infinity, thus supporting \textit{Theorem \ref{Th:predicted_response_consistency}}.

\begin{figure}[t]
    \centering
\includegraphics[scale=0.65]{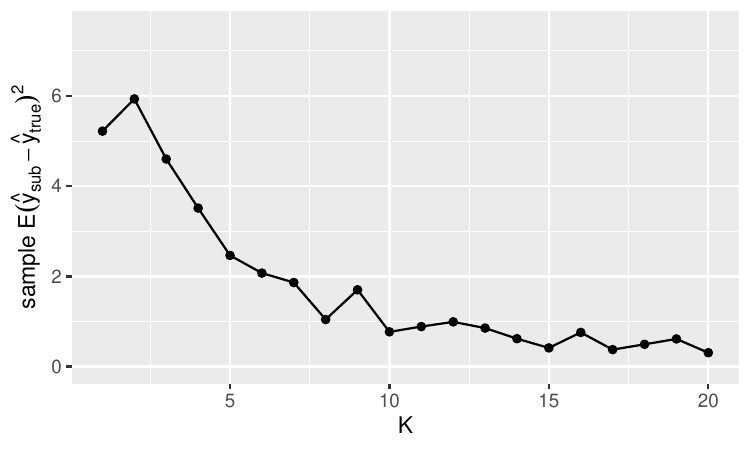}
    \caption{Plot showing that average squared difference between the predicted response based on the true regressors and the predicted response obtained from our method approaches zero, as the number of multilayers, number of graphs per multilayer and the number of nodes per graph increases in a suitable manner.}
    \label{fig:pred_res_const}
\end{figure}

We next present simulation results in support of \textit{Corollary~\ref{Cor:power_convergence}}. 
For testing $H_0:\beta=0$ versus $H_1:\beta \neq 0$, we choose level of significance $\Tilde{\alpha}$. 
The setting is same as before, except this time we take $n_K=15+ \lfloor (K-1)^{1.5} \rfloor$, 
$M_K=8+(K-1)$ and $N_K=10+(K-1)$. For every $K$ in the range $\left\lbrace 1,2,\dots 30 \right\rbrace$, on each of $100$ Monte Carlo samples, we proceed as before to obtain the raw-stress embeddings $\left\lbrace 
\hat{z}_i
\right\rbrace_{i=1}^N$. Using a linear regression model on the bivariate data $(y_i,t_i)_{i=1}^s$ we compute the predicted responses $\left\lbrace \hat{y}_i \right\rbrace_{i=1}^s$ and thus obtain the true $F$-statistic $F^*$, and 
 using a linear regression model on the bivariate data
$(y_i,\hat{z}_i)_{i=1}^s$, we obtain the predicted responses $\left\lbrace \Tilde{y}_i \right\rbrace_{i=1}^s$  and thus obtain the substitute $F$-statistic
$\hat{F}$, where
\begin{equation*}
    F^*=(s-2)
        \frac
        {
        \sum_{k=1}^s (\hat{y}_k-\Bar{y})^2
        }
        {
        \sum_{k=1}^s (y_k-\hat{y}_k)^2
        }, \hspace{0.15cm}
\hat{F}=(s-2)
        \frac
        {
        \sum_{k=1}^s (\tilde{y}_k-\Bar{y})^2
        }
        {
        \sum_{k=1}^s (y_k-\tilde{y}_k)^2
        }.
\end{equation*}
The test based on $F^*$ (and equivalently, also the test based on $\hat{F}$) rejects $H_0$ at significance level $\Tilde{\alpha}$ if $F^*>c_{\Tilde{\alpha}}$ for pre-specified threshold $c_{\Tilde{\alpha}}$.
For each statistic amongst $F^*$ and $\hat{F}$, we estimate the 
power of the test based on that statistic by computing the
proportion of times the test based on that statistic rejects $H_0$ at level $\Tilde{\alpha}$. We calculate the absolute difference between the estimated powers of the tests based on $F^*$ and $\hat{F}$ and plot them against $K$, and the resulting plot is given in \textit{Figure \ref{fig:power_conv}}.  We observe that the difference between the estimated powers of the tests approaches zero as $K$ increases. 
\begin{figure}[t]
    \centering
\includegraphics[scale=0.65]{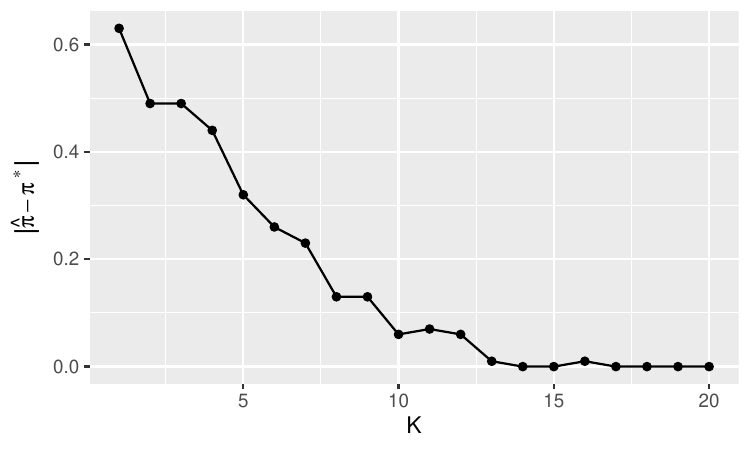}
    \caption{Plot showing the difference between estimated powers of the tests based on the true and the substitute $F$-statistics approaching zero as the number of multilayers, number of graphs per multilayer and number of nodes per graph increase at suitable rates. A set of $100$ Monte Carlo samples of a collection of multilayer directed random graphs are generated, and one-dimensional embeddings are obtained by raw-stress minimization on the Double Unfolded Adjacency Spectral Embeddings. Responses are regressed against these one-dimensional raw-stress embeddings to obtain a substitute $F$-statistic.}
    \label{fig:power_conv}
\end{figure}
\section{Real Data Analysis}
\label{Sec:Real_Data}
In this section, we demonstrate the use of our methodology in the analysis of biological learning networks of larval \textit{Drosophila}. The wiring diagram, also known as connectome, of the larval \textit{Drosophila} has been recently completed \citep{winding2023connectome}, which enables simulation of biologically realistic models of the circuits of neurons based on known anatomical connectivity \citep{eschbach2020recurrent}. There have been recent works on studying the learning networks (circuit of neurons responsible for learning in an organism)  by training connectome-constrained models to perform associative learning in simualtions where a given stimulus is delivered, eliciting a certain network output in the animal (for instance, when an odour is coupled with pain, the odour loses its attraction to the organism).

To be more specific, in our case, the network models are trained to perform extinction learning. In this phenomenon, an association between a conditioned stimulus (e.g an odour) and reinforcement (e.g pain) is initially learnt, and then weakened by exposure to the same conditioned stimulus in absence of the reinforcement. The behaviour of the network over different timepoints is simulated, comprising a time series of networks corresponding to a single extinction learning trial. A total of $143$ such trials are performed corresponding to $11$ different replications for each of $13$ different models, each model being a result of removal of a particular synapse from the parent network. In each trial, at first a conditioned stimulus, followed by reinforcement (pain or reward), is delivered, and then after significant gaps the stimulus is delivered again twice, without being coupled with the reinforcement.
A learning score is recorded for every extinction learning trial. The learning score is defined as the ratio of the network output at the third conditioned stimulus to that at the second conditioned stimulus, where the network output at a particular time is defined as the ratio of degree of aversion to the degree of attraction to the conditioned stimulus at that time. 

We thus have $M=143$ time series of networks, each associated with a learning score. Each time series has $N=160$ networks and each network has $n=140$ nodes. We convert each network into a binary one by choosing to record the entry of the adjacency matrix as one if its modulus exceeds a particular threshold, and as zero otherwise. The threshold is taken to be the $25$-th percentile of the absolute values of the original edge weights. We perform Double Unfolded Adjacency Spectral Embedding on the collection of these time series of networks (with embedding dimension $d=3$), and thus obtain a matrix representation for every time series. We obtain a dissimilarity matrix of the pairwise differences (measured in two-to-infinity norm) of the matrix representations of the time series, and by subsequent raw-stress minimization we obtain one-dimensional embeddings for all the time series. A linear regression model is assumed to link the responses with the one-dimensional raw-stress embeddings, and an $F$-test with $p\text{-value}=0.026$ justifies it (at level of significance $0.05$).
The scatterplot of the responses against the one-dimensional raw-stress embeddings is givenin \textit{Figure~\ref{fig:real_pred_res_vs_rs_emb}}, along with the fitted regression line.

\begin{figure}[t]
    \centering
\includegraphics[scale=0.60]{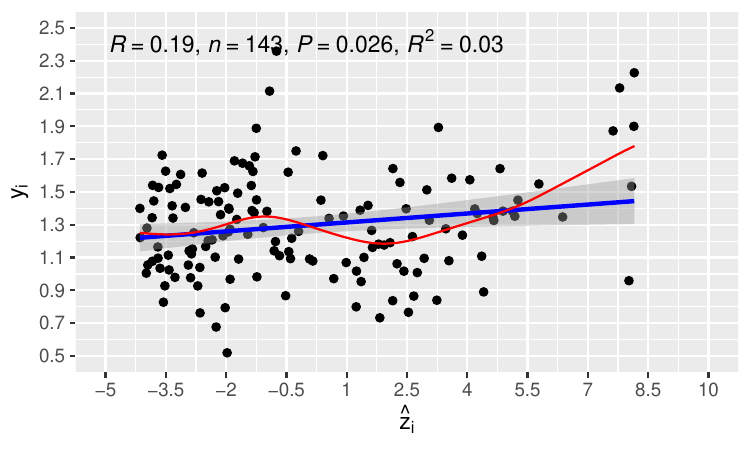}
    \caption{Scatterplot of the responses $y_i$ against the one-dimensional raw-stress embeddings $\hat{z}_i$, along with the fitted linear regression line in blue. An $F$-test is performed to check the validity of the simple regression model $y_i=a+b\hat{z}_i+\eta_i$ and $p=0.026$ is obtained, justifying the use of the simple linear regression model. The estimated model parameters are $\hat{a}=1.296$ and $\hat{b}=0.018$. The curve in red denotes the fitted nonparametric regression curve, by the method of local linear regression, which gives an R-squared value of $0.1667$. 
    }
    \label{fig:real_pred_res_vs_rs_emb}
\end{figure}

\section{Discussion}
\label{Sec:Discussion}

In this paper, we propose a method to predict a response corresponding to an unlabeled time series of networks, in a semisupervised setting. We assume that each time series of networks correspond to a scalar pre-image such that a suitable measure of pairwise distances between the time series is captured by the interpoint distances between the scalar pre-images. Assuming a linear regression model links the responses to the scalar pre-images, we propose to predict the responses by using raw-stress minimization to find proxies for the true regressors. We provide theoretical and numerical justification for our method in \textit{Sections \ref{Sec:Theoretical_results}} and \textit{\ref{Sec:Simulations}} respectively. 

We also demonstrate the use of our method in analysis of learning networks in larval \textit{Drosophila}. A collection of time series of networks, each representing the behaviour of the learning circuit in larval \textit{Drosophila} over snapshots of time in a learning trial, is observed. Each time series is associated with a response dubbed learning score. Our method obtains one-dimensional embeddings corresponding to all the time series, such that the responses can be linked to the embeddings via a linear regression model at level $0.05$ (an $F$-test results in a $p$-value of $0.026$).

The desirable asymptotic properties of our method are reliant on the guarantee of vanishing uniform bound on the errors in estimating the regressors. Such guarantee can help extend our result to the regime where a nonparametric regression is used to link the responses with the regressors. We provide an example of predicting the responses via a nonparametric regression model, from our real dataset (details in \textit{Section \ref{Sec:Real_Data}}). The setting is same as 
the one described in \textit{Section \ref{Sec:Real_Data}}, where we have $143$ time series of networks, each associated with a response. A one-dimensional embedding is obtained corresponding to every time series by raw stress minimization upon the DUASE embeddings, and a local linear regression model is used to predict the responses. The plot of the responses against the regressors, along with the fitted regression curve, is given in \textit{Figure \ref{fig:real_pred_res_vs_rs_emb}}.

The results in this paper involve establishing asymptotic convergence guarantees for the output of the proposed algorithm. Finding the rate of convergence of the predicted response obtained from our method to the predicted response obtained from the true regressors is an interesting open problem in this area. Solving that problem would give us an idea of how large the set of auxiliary time series needs to be in order to achieve a given level of accuracy. Moreover, making finite-sample improvements to the algorithmic output taking the measurement error into account, comprises a potentially intriguing problem solving which will likely be beneficial to the practitioners.

\ifCLASSOPTIONcompsoc
  \section*{Acknowledgments}
\else
  \section*{Acknowledgment}
\fi

Francesco Sanna Passino acknowledges funding from the Engineering and Physical Sciences Research Council (EPSRC), grant number EP/Y002113/1.

\bibliographystyle{plainnat}
\bibliography{ref}

\appendices

\section{Background justifications and proofs}
\label{Sec:Appendix}

\subsection{Certain background justifications}

\textbf{Use of Theorem 2:}
From the entrywise convergence of a sequence of dissimilarity matrices of growing size, we use \textit{Theorem \ref{Th:dissimilarity_minimizer_convergence}} to conclude the convergence of the corresponding globally minimizng EDM-1 matrices. However, \textit{Theorem \ref{Th:dissimilarity_minimizer_convergence}} actually rests on the a setting of a  sequence of dissimilarity functions 
$\lbrace 
\hat{\boldsymbol{\Delta}}^{(K)}
\rbrace$
converging pointwise to another dissimilarity function $\lbrace
\hat{\boldsymbol{\Delta}}^{(\infty)}
\rbrace$, and a sequence of probability distributions 
$\lbrace 
\hat{\mathcal{P}}_K
\rbrace$ converging uniformly 
to a probability distribution $\mathcal{P}_{\infty}$, and it states that any sequence 
$\hat{\mathbf{D}}^{(K)} \in \mathrm{Min}(\hat{\boldsymbol{\Delta}}^{(K)},\hat{\mathcal{P}}_K)$ will have an accumulation point 
$\hat{\mathbf{D}}^{(\infty)} \in \mathrm{Min}(\hat{\boldsymbol{\Delta}}^{(\infty)},\mathcal{P}_{\infty})$.
\newline
\newline
\textbf{Statement of Proposition 2:}
While the conclusion in our paper for \textit{Proposition \ref{Prop:Raw_stress_embedding_consistency}} is pointwise convergence of the raw stress embeddings, a much stronger statement holds true in this regard. The mode of convergence is $L^p$, that is:
\begin{equation*}
  \lim_{K \to \infty}  \int_{\mathcal{L}} \int_{\mathcal{L}}
  \left(
  |\hat{z}_{k_1}-\hat{z}_{k_2}|-|t_{k_1}-t_{k_2}|
  \right)^p
  \mathscr{P}(d t_{k_1})
  \mathscr{P} (d t_{k_2})
  =0
\end{equation*}
for some $p \geq 1$, under \textit{Assumption \ref{Asm:scalars_in_compact_set}}.

\subsection{Numerical justification for approximation in higher dimensional ambient spaces} \label{sec:numerical_justification}
In our simulation in \textit{Section \ref{Sec:Simulations}}, the relation
\begin{equation}
|t_{k_1}-t_{k_2}|=
\left\lVert 
\left(
\mathbf{U}_{\boldsymbol{\mathcal{P}}}
\mathbf{S}_{\boldsymbol{\mathcal{P}}}^{\frac{1}{2}}
\right)_{[\mathscr{S}^{k_1}_n,]}
-
\left(
\mathbf{U}_{\boldsymbol{\mathcal{P}}}
\mathbf{S}_{\boldsymbol{\mathcal{P}}}^{\frac{1}{2}}
\right)_{[\mathscr{S}^{k_2}_n,]}
\right\rVert_{2,\infty}
\end{equation}
up to rescaling of all the $t_k$ holds exactly for $d=1$ and approximately for $d>1$, because for $d>1$ the second largest singular value of $\boldsymbol{\mathcal{P}}$ is much smaller than its largest singular value. Numerical evidence is shown in \textit{Table~\ref{tab:num_just}}. 

\begin{table}[t]
    \centering
\pgfplotstabletypeset[
 columns/0/.style={column name={$K$}},
    columns/1/.style={column name={$\sigma_1(\boldsymbol{\mathcal{P}})$}},
    columns/2/.style={column name={$\sigma_2(\boldsymbol{\mathcal{P}})$}},
    header=false,
    string type,
    before row=\hline,
    every last row/.style={after row=\hline},
    column type/.add={|}{},
    every last column/.style={column type/.add={}{|}}
]{P3num_app.txt}
\caption{Largest and second largest singular values of the grapnd probability matrix $\boldsymbol{\mathcal{P}}$
as the number of nodes in each graph, number of graphs in each time series and the number of time series grow simultaneously.
} 
\label{tab:num_just}
\end{table}

\subsection{Proofs of theoretical results}
\label{Subsec:Proofs}

\propone*

\begin{proof}
Fix $k_1,k_2 \in \mathbb{N}$. Observe that by triangle inequality and invariance of Frobenius norm to orthogonal transformation,
\begin{equation*}
\begin{aligned}  
&\left\lVert 
    \mathbf{X}^{(k_1)}_{\boldsymbol{\mathcal{A}}}-
    \mathbf{X}^{(k_2)}_{\boldsymbol{\mathcal{A}}}
    \right\rVert_{2,\infty} -
    \left\lVert 
    \mathbf{X}^{(k_1)}_{\boldsymbol{\mathcal{P}}}-
    \mathbf{X}^{(k_2)}_{\boldsymbol{\mathcal{P}}}
    \right\rVert_{2,\infty} \\ 
    &\leq 
    \left\lVert 
\mathbf{X}^{(k_1)}_{\boldsymbol{\mathcal{A}}}-
\mathbf{X}^{(k_1)}_{\boldsymbol{\mathcal{P}}} \mathbf{Q}
    \right\rVert_{2,\infty} +
    \left\lVert 
\mathbf{X}^{(k_2)}_{\boldsymbol{\mathcal{A}}}-
\mathbf{X}^{(k_2)}_{\boldsymbol{\mathcal{P}}} \mathbf{Q}
    \right\rVert_{2,\infty}.
\end{aligned}
\end{equation*}
Likewise, we can write, 
\begin{equation*}
    \begin{aligned} 
&\left\lVert 
\mathbf{X}^{(k_1)}_{\boldsymbol{\mathcal{P}}}-
\mathbf{X}^{(k_2)}_{\boldsymbol{\mathcal{P}}}
\right\rVert_{2,\infty} -
\left\lVert 
\mathbf{X}^{(k_1)}_{\boldsymbol{\mathcal{A}}}-
\mathbf{X}^{(k_2)}_{\boldsymbol{\mathcal{A}}} 
\right\rVert_{2,\infty} \\ &\leq 
\left\lVert 
        \left(
\mathbf{X}^{(k_1)}_{\boldsymbol{\mathcal{A}}}-
\mathbf{X}^{(k_1)}_{\boldsymbol{\mathcal{P}}}\mathbf{Q}
        \right) -
        \left(
\mathbf{X}^{(k_2)}_{\boldsymbol{\mathcal{A}}}-
\mathbf{X}^{(k_2)}_{\boldsymbol{\mathcal{P}}}\mathbf{Q}
        \right)
        \right\rVert_{2,\infty}.
    \end{aligned}
\end{equation*}
Thus, from the above two equations combined, 
\begin{equation*}
\begin{aligned}
&\left|
   \left\lVert 
\mathbf{X}^{(k_1)}_{\boldsymbol{\mathcal{A}}}-
\mathbf{X}^{(k_2)}_{\boldsymbol{\mathcal{A}}}
\right\rVert_{2,\infty} -
\left\lVert 
\mathbf{X}^{(k_1)}_{\boldsymbol{\mathcal{P}}}-
\mathbf{X}^{(k_2)}_{\boldsymbol{\mathcal{P}}} 
\right\rVert_{2,\infty}
\right| \\
&\leq 
\left\lVert 
        \left(
\mathbf{X}^{(k_1)}_{\boldsymbol{\mathcal{A}}}-
\mathbf{X}^{(k_1)}_{\boldsymbol{\mathcal{P}}}\mathbf{Q}
        \right) -
        \left(
\mathbf{X}^{(k_2)}_{\boldsymbol{\mathcal{A}}}-
\mathbf{X}^{(k_2)}_{\boldsymbol{\mathcal{P}}}\mathbf{Q}
        \right)
        \right\rVert_{2,\infty}. 
\end{aligned}
\end{equation*}
Recall that from \textit{Theorem \ref{Th:DUASE_XA_const_est_XP}}, the right hand side goes to zero, hence so does the left hand side. Thus, for each $k_1,k_2 \in \mathbb{N}$,
\begin{equation*}
    \left|
    \hat{\boldsymbol{\Delta}}^{(K)}_{k_1,k_2}-
    \boldsymbol{\Delta}^{(K)}_{k_1,k_2}
    \right| \to^P 0
\end{equation*}
as $K \to \infty$.
\end{proof}

\proptwo*
\begin{proof}
From \textit{Proposition \ref{Prop:DUASE_pairwise_max_diff_zero}}, for  every $k_1,k_2$,
$\left|
\hat{\boldsymbol{\Delta}}_{k_1,k_2}-\boldsymbol{\Delta}_{k_1,k_2}
\right|
\to^P
0$ as $K \to \infty$. From 
model assumptions, for every $k_1,k_2$, 
$
\boldsymbol{\Delta}_{k_1,k_2}=
|t_{k_1}-t_{k_2}|
$ as $K \to \infty$. Hence, for every $k_1,k_2$,
$\left|
\hat{\boldsymbol{\Delta}}_{k_1,k_2}-|t_{k_1}-t_{k_2}| 
\right|
\to^P 0
$ as $K \to \infty$. By \textit{Theorem 3} in \cite{trosset2024continuous}, we have that 
$\left|
|\hat{z}_{k_1}-\hat{z}_{k_2}|-|t_{k_1}-t_{k_2}|
\right| \to^P 0$.
\end{proof}

\theothree*
\begin{proof}
From \textit{Proposition \ref{Prop:Raw_stress_embedding_consistency}} we have,
\begin{equation*}
    \max_{k_1,k_2 \in [s]}
    \left(
    |\hat{z}_i-\hat{z}_j|-|t_i-t_j|
    \right) \to^P 0
\end{equation*}
as $N,M,n \to \infty$. As the difference between the interpoint distances between the embeddings $\left\lbrace \hat{z}_k \right\rbrace_{k=1}^s$ and the interpoint distances between the true regressors $\left\lbrace t_k \right\rbrace_{k=1}^s$ approach zero, the raw stress embeddings approach an affine transformation on the true regressors. Since we know that an affine transformation upon the true regressors in a simple linear regression model does not alter a predicted response value, the predicted response $\Tilde{y}_r$ based on the embeddings $\hat{z}_k$ approach the predicted response $\hat{y}_r$ based on the true regressors $t_k$.
\end{proof}

\coroone*
\begin{proof}
We know from \textit{Theorem \ref{Th:predicted_response_consistency}}, for any $(\alpha,\beta) \in \mathbb{R}^2$, for all $r \in [s]$, $|\Tilde{y}_r-\hat{y}_r| \to 0$ as $K \to \infty$. Hence, for all $(\alpha,\beta) \in \mathbb{R}^2$, $|\hat{F}-F^*| \to 0$ as $K \to \infty$, and hence for all $(\alpha,\beta) \in \mathbb{R}^2$, for any significance level $\Tilde{\alpha}$, $|\hat{\pi}-\pi^*|=|\mathbb{P}_{\alpha,\beta}[\hat{F}>c_{\Tilde{\alpha}}]-\mathbb{P}_{\alpha,\beta}[F^*>c_{\Tilde{\alpha}}]| \to 0$ as $K \to \infty$.
\end{proof}


\ifCLASSOPTIONcaptionsoff
  \newpage
\fi

\begin{IEEEbiographynophoto}{Aranyak Acharyya}
Aranyak Acharyya received his Bachelors degree in Statistics from Presidency University at Kolkata, India in 2017, and his Masters degree in Statistics from Indian Institute of Technology Kanpur in 2019. From 2019 to 2024, he was a PhD student in the Department of Applied Mathematics and Statistics at Johns Hopkins University, and then he began working as a postdoctoral fellow in the Mathematical Institute for Data Science in Johns Hopkins University. His research interests include statistical inference on networks, manifold learning and artificial intelligence.  
\end{IEEEbiographynophoto}

\begin{IEEEbiographynophoto}{Francesco Sanna Passino}
Francesco Sanna Passino received a double BSc degree in Statistics from the universities of Bologna and Glasgow in 2016, a MSc degree in Statistics from Imperial College London in 2017, and a PhD in Statistics from Imperial College London in 2021. Since 2022, he has been a Lecturer (Assistant Professor) in Statistics at Imperial College London. His research interests mostly revolve around statistical analysis of dynamic networks, latent variable models and model-based clustering. 
\end{IEEEbiographynophoto}


\begin{IEEEbiographynophoto}{Michael W. Trosset}
Michael W. Trosset received his BA degree in Mathematics from Rice University 1978, and the PhD degree in Statistics from the University of California at Berkeley in 1983. He has held faculty positions at the University of Arizona, the College of William and Mary, and Indiana University where he currently chairs the Department of Statistics. He is the author of an introductory textbook, An Introduction to Statistical Inference and Its Applications to R. His research interests include euclidean representations of proximity data, nonlinear dimension reduction, computer experiments and stochastic optimization.  
\end{IEEEbiographynophoto}

\begin{IEEEbiographynophoto}{Carey E. Priebe}
Carey E. Priebe received the Bachelors degree in Mathematics from Purdue University in 1984, the Masters degree in Computer Science from San Diego State University in 1988, and the PhD degree in Information Technology from George Mason University in 1993. Since 1994 he has been a professor at the Department of Applied Mathematics and Statistics at Johns Hopkins University. His research interests include statistical analysis of random networks, statistical pattern recgnition and statistical inference on high dimensional data. He is a senior member of the IEEE, and Elected Member of the International Statistical Institute, a Fellow of the Institute of Mathematical Statistics, and a Fellow of the American Statistical Association.
\end{IEEEbiographynophoto}
\vfill




\end{document}